\documentclass{amsart}
\usepackage{amssymb,amsmath,amsthm}
\usepackage[cm]{fullpage}
\usepackage{ifthen}
\usepackage{tikz}

\usepackage{booktabs}
\definecolor{regC}{RGB}{46,105,164}   
\definecolor{natC}{RGB}{93,79,157}    
\definecolor{levC}{RGB}{212,120,28}   
\definecolor{incC}{RGB}{176,52,62}    
\definecolor{finC}{RGB}{47,122,84}    
\definecolor{bandN}{RGB}{246,244,250} 
\definecolor{bandR}{RGB}{242,246,251} 
\definecolor{cA}{RGB}{46,105,164}    
\definecolor{cB}{RGB}{212,120,28}    
\definecolor{cC}{RGB}{93,79,157}     
\definecolor{cD}{RGB}{176,52,62}

\usepackage{todonotes}
\usepackage{caption}
\usepackage{subcaption}
\usepackage{floatrow}

\usepackage{xpatch}
\usepackage{graphicx}
\theoremstyle{definition}
\usepackage{comment}
\usepackage{sidecap}
\usepackage{multirow}
\usepackage{enumitem}
\usepackage{xcolor}
\definecolor{myred}{rgb}{0.8,0.1,0.1}

\usepackage{aliascnt}
\usepackage{multirow}
\usepackage{booktabs}

  \definecolor{preC}{RGB}{105,105,115}   
  \definecolor{bandN}{RGB}{246,244,250}
  \definecolor{bandR}{RGB}{242,246,251}

 \usepackage{algorithm}
\usepackage[noend]{algpseudocode}

\algnewcommand\algorithmicinput{\textbf{Input:}}
\algnewcommand\Input{\item[\algorithmicinput]}
\algnewcommand\algorithmicinferred{\textbf{Inferred:}}
\algnewcommand\Inferred{\item[\algorithmicinferred]}
\algnewcommand\algorithmicoutput{\textbf{Output:}}
\algnewcommand\Output{\item[\algorithmicoutput]}

\newcommand{\FRK}[1]{\textcolor{olive}{\textsf{#1}}}
\DeclareMathOperator{\Cov}{Cov}

\usepackage{mathtools}
\DeclarePairedDelimiter{\abs}{\lvert}{\rvert}

\usetikzlibrary{positioning,shapes.geometric,arrows.meta,fit,backgrounds,calc,decorations.pathreplacing}
\usepackage{fancybox}
\usepackage{array}

\usepackage[hidelinks,hypertexnames=false]{hyperref}
\usepackage{cleveref}

\makeatletter
\newcommand{\newsiblingtheorem}[3]{%
  \newaliascnt{#1}{theorem}%
  \newtheorem{#1}[#1]{#2}%
  \aliascntresetthe{#1}%
  \crefname{#1}{#2}{#3}%
  \Crefname{#1}{#2}{#3}%
}
\makeatother

\theoremstyle{plain}
\newtheorem{theorem}{Theorem}[section]
\crefname{theorem}{Theorem}{Theorems}
\Crefname{theorem}{Theorem}{Theorems}

\newsiblingtheorem{proposition}{Proposition}{Propositions}
\newsiblingtheorem{lemma}{Lemma}{Lemmas}
\newsiblingtheorem{corollary}{Corollary}{Corollaries}
\newsiblingtheorem{claim}{Claim}{Claims}
\newsiblingtheorem{conjecture}{Conjecture}{Conjectures}
\newsiblingtheorem{hypothesis}{Hypothesis}{Hypotheses}

\theoremstyle{definition}
\newsiblingtheorem{definition}{Definition}{Definitions}
\newsiblingtheorem{problem}{Problem}{Problems}
\newsiblingtheorem{counterexample}{Counterexample}{Counterexamples}
\newsiblingtheorem{observation}{Observation}{Observations}
\newsiblingtheorem{question}{Question}{Questions}
\newsiblingtheorem{example}{Example}{Examples}
\newsiblingtheorem{input1}{Input1}{Input1s}
\newsiblingtheorem{coupling}{Coupling}{Couplings}

\theoremstyle{remark}
\newsiblingtheorem{remark}{Remark}{Remarks}

\author{Frederik Ravn Klausen*}
\thanks{*frk23@cam.ac.uk, University of Cambridge, DPMMS, Cambridge, United Kingdom}
\address{Frederik Ravn Klausen, University of Cambridge, DPMMS, Cambridge, United Kingdom}
\email{frk23@cam.ac.uk}

\title{Ensuring proportionality: a logical model for compensatory seats added to multi-member constituencies}

\begin{document}
\begin{abstract}
How many compensatory seats are needed to ensure (full) proportionality in two-tier electoral systems with multi-member constituencies?
This question is answered by estimating the discrepancy between seats and votes rigorously, probabilistically and empirically. 

In 1919, P{\'o}lya showed that the Jefferson/D'Hondt method favours large parties. This seat surplus accumulates across constituencies.  In some countries (e.g., Portugal, Poland, Spain and Turkey), this accumulation directly leads to quantifiable disproportionality in the final parliament. Elsewhere compensatory seats counteract this accumulated seat surplus.

Using a logic-first approach, closed formulas for the required number of compensatory seats are derived given regional and national apportionment methods, the number of relevant parties $n$ and the size of the largest party $p$. If there are $c$ constituencies, it is derived logically that a two-tier system built on local D'Hondt apportionment in multi-member constituencies needs around $\frac{c}{2}(n-\frac{1}{p})$ compensatory seats to ensure proportionality. In words, that is half a compensatory seat per constituency times a number that increases with the number of parties and the size of the largest party.

The validity of the derived formulas is analysed empirically using the extensive recent CLEA dataset. Many quantities do not depend much on the total parliament sizes, which means that the variance can be reduced by artificially varying the house sizes. Portugal, Poland, Spain and Turkey all employ a one-tier D'Hondt system and the analysis shows that if they were to add compensatory seats to the system, reasonable numbers would be 40, 75, 75 and 120, assuming that no other properties of the systems change.

The increased mathematical understanding of the number of needed compensatory seats directly informs policy choices that have to be made when adding compensatory seats to a one-tier system.
It also increases the public understanding which is crucial for reforming electoral systems. 
    \end{abstract}
\maketitle

\section{Introduction}
In the Polish election of 2015 the PiS party famously secured a majority (51\% of the seats) in the Sejm with only 37.6\% of the vote, under a system that is called   proportional. How could this  spectacular outcome occur?
Through the bias of the D'Hondt method which gives a seat surplus that accumulates across the 41 Polish constituencies combined with a relatively high electoral threshold. 

To avoid such a mismatch between votes and seats compensatory two-tier systems manage to reconcile proportional and geographical representation by adding compensatory seats \cite{elklit1996category}.
Often, the compensatory seats are enough to fully compensate for the disproportionality arising in the one-tier base. However, in recent years, it has been more common that this compensation could not be carried out fully, in which case the regional and national methods are said to be \emph{incongruent}. This is often a consequence of the increased political fragmentation. 
Incongruence is closely related to, but distinct from \emph{overhang}. Overhang is the name for the additional seats added to parliament that overrepresented parties get in a specific way of resolving the incongruence. Overhang seats are in use in New Zealand and were used in the German Bundestag until 2023, where the system was reformed after the parliament size increased from 598 to 736. 

The recent strain on the mechanisms of two-tier compensatory systems has taken several forms.  
For instance, Sweden reformed the electoral system in 2014 after experiencing incongruence in 2010. In one-tier electoral systems, the increased number of parties does not challenge the mathematical mechanisms, but the disproportionality increases (cf. the Polish Election of 2015). Another instance of this trend is the dramatic 2022 Danish election that was decided by a single incongruent seat \cite{elklit2025hvad}. This event motivated the study \cite{klausensandsynligheden} of the probability of overrepresentation in a  Danish context. 
This paper further develops and generalizes the methods of that paper.

Without a bound on the number of parties it is impossible to bound the number of needed compensatory seats  to avoid incongruence \cite{holdum2025impossibility}. 
Nevertheless, the literature has mentioned $25\%$ of the seats as a rule-of-thumb \cite{aardal2010norske}. 
In this paper, it is argued that one should rather discuss the number of compensatory seats per constituency. With first past the post the percentage is the same as the number of compensatory seats per constituency. This is not true for multi-member constituencies.

The level of disproportionality in one-tier systems depends both on the apportionment method and on the party structure. As it is well known the Jefferson/D'Hondt method has a bias towards the larger parties \cite{polya1919proportionalwahl, schuster2003seat} whereas the Webster/Sainte-Laguë method is unbiased (at least asymptotically \cite{janson2014asymptotic}).  In one-tier systems this disproportionality materialises as \emph{seat surplus} (the number of seats minus the proportional share, see \eqref{eq:seat surplus_definition}). The seat surplus arises as a combination of method bias, malapportionment and fluctuations (for parties above electoral thresholds):
\begin{align}\label{eq:seat_surplus_words}
    \text{seat surplus} = \text{method bias} + \text{malapportionment} + \text{fluctuations}.
\end{align}
Each term can be quantified (cf. \eqref{eq:seat_surplus_split_into_mechanisms}). 
Throughout, $c$ denotes the number of constituencies, $p$ the vote share of the largest party and $n$ is the number of relevant parties (usually this just means the number of parties above the electoral threshold).

This paper focuses on the case of multi-member constituencies and the number of compensatory seats needed to ensure the highest possible degree of proportionality. This is achieved when every party gets at least as many seats when the seats are distributed based on a national tally as when the seats are handed out regionally, we say the two apportionments are \emph{congruent}, see the next sections for further explanations. The requirement on the compensatory tier is essentially determined by the one-tier seat surplus (cf. \Cref{prop:rigorous_bound_on_L}).

Below it will become clear that for the D'Hondt method, the bias is the dominating factor. In that case, it is derived logically (cf. \eqref{eq:expected_number_of_levelling_seats}) that a two-tier system needs around $\frac{c}{2}(n-\frac{1}{p})$ compensatory seats to ensure proportionality.
This prediction is confirmed empirically. 

This law fits into the tradition of logical models of electoral systems \cite{shugart2017votes} and is directly relevant for policy makers and electoral reform advocates.
The law informs estimates on how many compensatory seats are reasonable in multi-member constituencies as they are currently employed in Poland, Spain, Turkey, etc. and informs ongoing debates. 
  
Droop Quota also has a bias towards the largest party and there a similar formula (cf. \eqref{eq:expected_number_of_levelling_seats_quota}) can be derived. 
For the asymptotically unbiased Hare Quota and the Sainte-Laguë methods the fluctuation term usually dominates.


It is also an aim that the understanding that comes from discussing several countries simultaneously will highlight the difference between the countries, which might make it easier for a country which is currently employing the first-past-the-post method to shift to proportional representation. 

\subsection{Omissions}
As electoral systems are extremely diverse we leave out discussions of candidates, two-round systems, non-transferable votes etc. 
The analysis in this paper formally applies to first past the post systems, but it will not be the focus of the paper as the presented bounds are weakest in that case.
Instead we focus on one- and two-tier systems with multi-member constituencies. 

The first-past-the-post case with compensatory seats, known as MMP in the literature, has been particularly important in the analysis of the expanding German parliament and in the introduction of compensatory  seats in New Zealand in 1996. 
In Germany, $60\%$ compensatory seats have been mentioned as an empirical rule to avoid incongruence \cite{bischof2021vierzig}. Furthermore, a logical model has been developed \cite{bochsler2023balancing} to address this case.

\subsection{Outline}
In this paper, we first discuss the mechanisms for disproportionality in one-tier PR electoral systems. Then we give a structural overview of compensatory  two-tier electoral systems. We can then define $L^*$, which is the number of compensatory seats needed to ensure the fullest possible level of proportionality.
It turns out that $L^*$ is very tightly connected to the one-tier seat surplus.
The seat surplus and the bias that creates it is studied both deterministically using the method of Janson \cite{janson2014asymptotic}
 and probabilistically using a generalization of the seat surplus formula from \cite{flis2020pot}. 
These sections give logical predictions for $L^*$ that are then tested empirically in \Cref{sec:empirical_study} using the extensive constituency-level dataset \cite{kollman2024clea}. 
Based on the analysis some conclusions and consequences for policy advice are drawn in \Cref{sec:policy_advice}. There we also put some concrete numbers on the Spanish, Polish and Turkish cases. 
 

\section{Mechanisms of disproportionality of one-tier PR electoral systems} \label{sec:seat surplus_accumulation}
In this section it is discussed how disproportionality can accumulate in one-tier PR electoral systems as a result of method bias, malapportionment, fluctuations and electoral thresholds. 

A one-tier electoral system\footnote{See \cite{holdum2025impossibility} for a rigorous definition.} in $c$ constituencies distributes the seats independently in each constituency using an apportionment method.
The apportionment method may differ from constituency to constituency, but seldom does so.

Depending on the context, the constituency seats $k_1, \dots, k_c$ can be distributed in many ways. For example, in first past the post systems $k_j=1$ for every constituency $j$, while other systems choose to distribute the constituency seats proportionally according to the number of inhabitants or voters in each constituency as will be discussed extensively in \Cref{sec:malapportionment}.

Even though many one-tier electoral systems with multi-member constituencies are usually labelled proportional representation, disproportionality can still arise. The mechanisms for disproportionality of one-tier PR electoral systems are listed below; see  \Cref{fig:one-tier_mechanisms} for an overview. 
\begin{itemize}
    \item[(a)] \emph{Method Bias:} If the one-tier apportionment method has a bias, the resulting seat surplus can accumulate across constituencies.  This is for example the case in Spain and Poland, where the largest parties get significantly overrepresented. 
    If the constituencies are small it can be very difficult for smaller parties to obtain a seat (even for an asymptotically  unbiased method as Sainte-Laguë). This effect, which is sometimes dubbed an effective threshold, can also be understood as a method bias. 
    \item[(b)]  \emph{Malapportionment:} If some regional seats are cheaper than others, parties that perform well in constituencies where seats are discounted can get overrepresented. This malapportionment transfer is strongest if the seats are not distributed to constituencies based on vote totals.  When the seats are cheaper, they are often cheaper in rural areas either due to political negotiation (or due to urbanisation if the seats have not been updated for a long time). In this case, parties that are strong in rural areas can get overrepresented. This was for example the case of the Danish party Venstre until 1970 \cite{klausensandsynligheden}. 
 \item[(c)]  \emph{Fluctuations:} A party could also have won the outermost seats in each of the constituencies and thereby get too many overall (although this is somewhat unlikely - especially when there are many constituencies). 
  \item[(d)]  \emph{Electoral thresholds:}  A one-tier percentage electoral threshold means that only parties with at least a fixed national percentage $\tau$ can win constituency seats\footnote{This should not be confused with a two-tier electoral threshold, where a party can keep regional seats, but is not entitled to any compensatory  seats.}. If $\tau$ is large that might mean that parties can win several percent of the vote without obtaining parliamentary representation at all. One of the best examples is Turkey, which from 1983 employed an electoral threshold of 10\%. As of 2023 the threshold has been lowered to 7\%. Since our analysis below becomes substantially more difficult if there are many small parties that need representation, but cannot win any constituency seats, we will impose an electoral threshold (usually 5\%) in many parts of the analysis. 
\end{itemize}
One aim of the mathematical analysis below is to prove rigorous bounds on the disproportionality of one-tier electoral systems and these mechanisms will emerge naturally from the rigorous framework. 

\begin{figure}[ht!]

\begin{tikzpicture}[
  font=\small,
  >={Latex[length=2.0mm]},
  panel/.style={draw=#1!50, fill=#1!3, rounded corners=4pt, line width=0.7pt},
  ptitle/.style={anchor=north west, font=\bfseries\small, text=#1!65!black},
  psub/.style={anchor=north west, font=\scriptsize, text=black!70},
  pex/.style={anchor=south west, font=\scriptsize\itshape, text=black!50},
  tick/.style={font=\scriptsize, text=black!60},
  note/.style={font=\scriptsize, text=black!75, align=center},
]

\begin{scope}[shift={(0,4.9)}]
  \draw[panel=cA] (0,0) rectangle (7.0,4.4);
  \node[ptitle=cA] at (0.22,4.20) {(a) Method bias};
  \node[psub] at (0.22,3.78) {the same sign in every constituency};

  \draw[black!45] (0.42,1.15) -- (6.55,1.15);
  \node[tick,anchor=east] at (0.37,1.15) {$0$};

  \foreach \i/\h in {0/0.30,1/0.24,2/0.35,3/0.28,4/0.22,5/0.32}
    \fill[cA!85] ({0.62+\i*0.45},1.15) rectangle ++(0.32,{\h*1.2});
  \draw[decorate,decoration={brace,mirror,amplitude=2.5pt},black!55]
    (0.62,1.06) -- (2.99,1.06) node[midway,below=3pt,note] {each constituency};

  \draw[black!30,dotted] (3.55,1.05) -- (3.55,3.35);

  \foreach \b/\h in {0.00/0.30,0.30/0.24,0.54/0.35,0.89/0.28,1.17/0.22,1.39/0.32}
    {\fill[cA!85] (4.10,{1.15+\b*1.2}) rectangle ++(0.80,{\h*1.2});
     \draw[white,line width=0.6pt] (4.10,{1.15+\b*1.2}) -- ++(0.80,0);}
  \node[note,anchor=north] at (4.50,1.06) {stacked};

  \node[pex] at (0.22,0.14) {D'Hondt, used in e.g.,\ Spain, Poland, Turkey.};
\end{scope}

\begin{scope}[shift={(7.5,4.9)}]
  \draw[panel=cB] (0,0) rectangle (7.0,4.4);
  \node[ptitle=cB] at (0.22,4.20) {(b) Malapportionment};
  \node[psub] at (0.22,3.78) {votes and seats, constituency by constituency};

  \fill[black!30] (5.90,2.86) rectangle ++(0.16,0.16);
  \node[note,anchor=west,text=black!65] at (6.12,2.94) {votes};
  \fill[cB!85] (5.90,2.54) rectangle ++(0.16,0.16);
  \node[note,anchor=west,text=black!65] at (6.12,2.62) {seats};

  \draw[black!45] (0.50,1.95) -- (5.75,1.95);
  \node[note,rotate=90,anchor=south,text=black!60] at (0.34,2.55) {share};
  \foreach \i/\v/\s in {0/10/16, 1/12/16, 2/14/18, 3/20/18, 4/20/16, 5/24/16}
  {
    \fill[black!30] ({0.58+\i*0.88},1.95) rectangle ++(0.30,{\v*0.05});
    \fill[cB!85]    ({0.90+\i*0.88},1.95) rectangle ++(0.30,{\s*0.05});
  }

  \foreach \i/\p in {0/0.62, 1/0.75, 2/0.78, 3/1.11, 4/1.25, 5/1.50}
    \fill[cB!22] ({0.58+\i*0.88},1.85) rectangle ++(0.62,{-\p*0.40});
  \draw[black!45,dashed] (0.50,1.45) -- (5.75,1.45);
  \node[font=\tiny,text=black!55,anchor=west] at (5.83,1.45) {fair price};

  \node[note,anchor=west,text=black!60] at (0.58,1.14) {cheap};
  \node[note,anchor=east,text=black!60] at (5.60,1.14) {expensive};
  \node[note,anchor=north,text=black!70] at (3.09,1.24) {price of a seat};

  \node[pex] at (0.22,0.14) {Outdated apportionment, area factors.};
\end{scope}

\begin{scope}[shift={(0,0)}]
  \draw[panel=cC] (0,0) rectangle (7.0,4.4);
  \node[ptitle=cC] at (0.22,4.20) {(c) Fluctuations};
  \node[psub] at (0.22,3.78) {random signs, so the terms cancel};

  \draw[black!45] (0.42,1.40) -- (6.55,1.40);
  \node[tick,anchor=east] at (0.37,1.40) {$0$};

  \foreach \i/\h in {0/0.30,1/-0.24,2/0.35,3/0.28,4/-0.22,5/0.12}
    \fill[cC!85] ({0.62+\i*0.45},1.40) rectangle ++(0.32,{\h*1.2});
  \node[note,anchor=north] at (1.80,1.02) {each constituency};

  \draw[black!30,dotted] (3.55,1.02) -- (3.55,2.60);

  \fill[cC!85] (4.10,1.40) rectangle ++(0.80,{0.59*1.2});
  \node[note,anchor=north] at (4.50,1.02) {stacked};

  \node[pex] at (0.22,0.14) {No clean example.};
\end{scope}

\begin{scope}[shift={(7.5,0)}]
  \draw[panel=cD] (0,0) rectangle (7.0,4.4);
  \node[ptitle=cD] at (0.22,4.20) {(d) Electoral thresholds};
  \node[psub] at (0.22,3.78) {below $\tau$: no seats anywhere};

  \fill[black!5] (0.72,1.62) rectangle (5.45,2.12);
  \draw[black!45] (0.72,1.62) -- (5.45,1.62);
  \node[note,rotate=90,anchor=south,text=black!60] at (0.34,2.45) {vote share};

  \foreach \i/\s/\nm/\c in {0/34.3/AKP/in, 1/19.4/CHP/in, 2/9.5/DYP/out,
                            3/8.4/MHP/out, 4/7.2/GP/out, 5/6.2/DEHAP/out,
                            6/5.1/ANAP/out}
  {
    \ifthenelse{\equal{\c}{in}}
      {\fill[cD!85] ({0.80+\i*0.66},1.62) rectangle ++(0.46,{\s*0.05});}
      {\filldraw[fill=black!18,draw=black!35,line width=0.4pt]
         ({0.80+\i*0.66},1.62) rectangle ++(0.46,{\s*0.05});}
    \node[font=\tiny,text=black!65,rotate=90,anchor=east]
      at ({1.03+\i*0.66},1.56) {\nm};
  }

  \draw[cD,dashed,thick] (0.72,2.12) -- (5.45,2.12);
  \node[font=\scriptsize,text=cD,anchor=west] at (5.53,2.12) {$\tau=10\%$};

  \node[pex] at (0.22,0.14) {Turkey 2002: 46\% of the votes elected nobody.};
\end{scope}

\end{tikzpicture}
\caption{Overview of some of the relevant mechanisms of disproportionality for one-tier systems. \label{fig:one-tier_mechanisms}
}
\end{figure}

\section{A structural overview of compensatory  two-tier PR electoral systems}
This section describes the structure of many two-tier compensatory PR systems and introduces the notion of \emph{incongruence} and the number of compensatory seats $L^*$ needed to avoid incongruence.

By definition, compensatory  two-tier electoral systems add compensatory seats to a one-tier system to level out disproportionalities arising from the one-tier basis. Throughout, the number of compensatory seats is denoted by $L$. This letter is chosen because compensatory seats are sometimes called levelling seats (or adjustment seats).

Two-tier electoral systems are often\footnote{Examples that do not fit into this framework are the Polish electoral system for the European parliament and double proportionality.} built on the basis of the following five algorithms.  See also the illustration in \Cref{fig:two_tier_algorithms}, 
\begin{itemize}
    \item[0.] (Constituency algorithm) An apportionment of a total number of constituency seats $k$ to the $c$ constituencies. The algorithm outputs $k_1, \dots, k_c$. An overview of this algorithm will be given in \Cref{tab:seat-apportionment} below.  
    \item[1.] (Regional algorithm) In each constituency this algorithm  determines the distribution constituency seats. Usually, based on an apportionment method applied in every constituency. If there are $k_j$ seats to be distributed in the $j$th constituency the regional algorithm distributes $k_1 + k_2 + \dots + k_c = k$ seats in total. 
    \item[2.] (National algorithm) An algorithm based on the national vote totals. In the cases of interest in this paper, this analysis is with $L$ added compensatory seats, i.e., on a total parliament size of $k+L$.  
    \item[3.] (Incongruence breaker) An algorithm that determines what happens when the regional and national algorithms cannot be reconciled congruently. This algorithm necessarily has to be a bit complicated\footnote{Sometimes these complications and the fact that the algorithms are written in legal texts and not as mathematics or code means that lacunas can arise in the electoral laws, see e.g., \cite{pennisi2006italian,elklit2025hvad,  coppola2026one}.}.  This algorithm is not the object of study in this paper. 
    \item[4.] (National to regional algorithm) Distribution of the national seats to constituencies. Not discussed further in this paper. 
\end{itemize}
The decomposition above focuses on the distribution of seats to parties. Often, there is yet another electoral algorithm that then distributes the party seats to candidates, which is omitted from the analysis of this paper. 


If every party obtains at least as many seats in the national apportionment as in the regional apportionment the outcomes are \emph{congruent}. If not, they are \emph{incongruent} and one or more parties are \emph{overrepresented}. 
Under weak assumptions, for every election outcome, there exists\footnote{This is a consequence of \Cref{prop:rigorous_bound_on_L}. Conversely, the main result of the impossibility theorem for two-tier electoral systems described in \cite{holdum2025impossibility} was to point out that for any electoral system there is no $L$ which is large enough to ensure that incongruence cannot occur for any election outcome.} a large enough number of compensatory seats (the minimal one will be denoted $L^*$) that makes sure that these two algorithms are congruent. 
The goal of this paper is to give rigorous bounds on this number, estimate its expected size and compute it counterfactually for a series of real elections. 

Since we are only interested in congruence the last two algorithms do not matter for us.

\begin{figure}[ht!]
\centering
\resizebox{\textwidth}{!}{%
\begin{tikzpicture}[
    font=\small,
    >={Latex[length=2.2mm]},
    algo/.style={draw=#1, fill=#1!8, thick, rounded corners=3pt,
                 align=center, inner sep=5pt, text width=34mm},
    pre/.style={draw=preC!70, fill=preC!6, thick, rounded corners=3pt,
                align=center, inner sep=5pt, text width=44mm},
    io/.style={draw=black!55, fill=white, rounded corners=2pt,
               align=center, inner sep=4pt, text width=23mm},
    badge/.style={circle, draw=#1, fill=white, thick, inner sep=1.1pt,
                  font=\scriptsize\bfseries, text=#1},
    lbl/.style={font=\footnotesize, align=center},
    arr/.style={->, thick, black!70},
  ]

\node[algo=regC] (regalgo) at (0,0)
  {\textbf{Regional algorithm}\\[1pt]
   \footnotesize $\mathtt{M}^{k_j}$ applied in each constituency separately};
\node[badge=regC] at (regalgo.north west) {1};

\node[algo=natC] (natalgo) at (0,5.0)
  {\textbf{National algorithm}\\[1pt]
   \footnotesize $\mathtt{M}$ applied to the national totals, for $k+L$ seats};
\node[badge=natC] at (natalgo.north west) {2};

\node[pre] (prealgo) at (0,-4.3)
  {\textbf{Constituency apportionment}\\[1pt]
   \footnotesize Determines the constituency seats $k_1, \dots, k_c$\\[1pt]
   \emph{\footnotesize usually fixed before the election}};
\node[badge=preC] at (prealgo.north west) {0};

\node[io] (const) at (-4.4,0.30) {constituency votes\\[1.5pt] $V$};
\node[io] (nat)   at (-4.4,5.00) {national totals\\[1.5pt] $v_1,\dots,v_n$};

\draw[arr] (const) -- node[lbl, left=1pt] {aggregate} (nat);
\draw[arr] (nat) -- (natalgo);
\draw[arr] (const.east) -- (-1.88,0.30);

\draw[arr, preC!80] (prealgo.west) -- (-2.62,-4.30) -- (-2.62,-0.30) -- (-1.88,-0.30);
\node[lbl, anchor=east, text=preC] at (-2.72,-2.85) {$k_1,\dots,k_c$};

\coordinate (rs) at ($(regalgo.south)+(-12.8mm,-8mm)$);
\foreach \i in {0,...,7}
  \fill[regC!85] ($(rs)+(\i*3.3mm,0)$) rectangle ++(2.5mm,2.5mm);
\draw[decorate,decoration={brace,mirror,amplitude=2.5pt}, black!60]
  ($(rs)+(0,-1mm)$) -- ($(rs)+(7*3.3mm+2.5mm,-1mm)$)
  node[midway, below=3pt, lbl] {$k=\textstyle\sum_j k_j$ constituency seats};

\coordinate (ns) at ($(natalgo.south)+(-17.75mm,-8mm)$);
\foreach \i in {0,...,7}
  \fill[regC!85] ($(ns)+(\i*3.3mm,0)$) rectangle ++(2.5mm,2.5mm);
\foreach \i in {8,9,10}
  \fill[levC] ($(ns)+(\i*3.3mm,0)$) rectangle ++(2.5mm,2.5mm);
\draw[decorate,decoration={brace,mirror,amplitude=2.5pt}, black!60]
  ($(ns)+(0,-1mm)$) -- ($(ns)+(7*3.3mm+2.5mm,-1mm)$)
  node[midway, below=3pt, lbl] {$k$};
\draw[decorate,decoration={brace,mirror,amplitude=2.5pt}, levC]
  ($(ns)+(8*3.3mm,-1mm)$) -- ($(ns)+(10*3.3mm+2.5mm,-1mm)$)
  node[midway, below=3pt, lbl, text=levC] {$L$};

\node[diamond, aspect=1.6, draw=black!70, fill=black!4, thick, align=center,
      inner sep=0.5pt, font=\footnotesize]
  (cons) at (5.4,2.5) {\textbf{congruent?}\\ $S\ge R$};

\draw[arr, regC] (regalgo.east) -- ++(5mm,0) |- (cons.west)
  node[pos=0.45, lbl, right=2pt, text=regC] {regional\\ seats $R$};
\draw[arr, natC] (natalgo.east) -| (cons.north)
  node[pos=0.22, lbl, above=1pt, text=natC] {national seats $S$};

\node[algo=incC, text width=29mm] (inc) at (5.4,0)
  {\textbf{Incongruence breaker}\\[1pt]
   \footnotesize reconciles the two apportionments};
\node[badge=incC] at (inc.north west) {3};
\draw[arr] (cons) -- node[lbl, right=1pt] {no} (inc);

\node[io, draw=finC, fill=finC!8, thick, text width=26mm] (finnat) at (9.9,2.5)
  {\textbf{final national apportionment}\\[1pt]
   \footnotesize $S$, a total of $k+L$ seats};
\draw[arr] (cons) -- node[lbl, above] {yes} (finnat);
\draw[arr, incC] (inc.east) -| node[pos=0.22, lbl, below] {reconciled} (finnat.south);

\node[algo=levC, text width=34mm] (n2r) at (14.3,2.5)
  {\textbf{National\,$\to$\,regional}\\[1pt]
   \footnotesize spreads the $S-R$ compensatory seats over the constituencies};
\node[badge=levC] at (n2r.north west) {4};
\draw[arr] (finnat) -- (n2r);

\begin{scope}[on background layer]
  \node[fit={(nat)(natalgo)($(ns)+(0,-10mm)$)($(ns)+(10*3.3mm+2.5mm,0)$)},
        fill=bandN, rounded corners=4pt, inner sep=6pt] (bandnat) {};
  \node[fit={(const)(regalgo)($(rs)+(0,-10mm)$)($(rs)+(7*3.3mm+2.5mm,0)$)},
        fill=bandR, rounded corners=4pt, inner sep=6pt] (bandreg) {};
  \node[anchor=north west, font=\footnotesize\scshape, text=natC!70!black, inner sep=2pt]
    at (bandnat.north west) {national tier};
  \node[anchor=south west, font=\footnotesize\scshape, text=regC!70!black, inner sep=2pt]
    at (bandreg.south west) {regional tier};
\end{scope}
\end{tikzpicture}
}
\caption{Overview of the mechanics of many two-tier electoral systems. This paper is concerned with how large $L$ has to be for the national and regional apportionment to be congruent. We are not concerned with the algorithms in (3) and (4). See for instance Example 4.1 in \cite{holdum2025impossibility} for a detailed explanation of the Icelandic electoral system, which is arguably the simplest possible two-tier system.    \label{fig:two_tier_algorithms}}
\end{figure}


The regional algorithm (or one-tier system) apportions seats to parties independently in each constituency. For elections for the Spanish \emph{Congreso de los Diputados}, Turkish \emph{Büyük Millet Meclisi}, Polish \emph{Sejm} or Portuguese \emph{Assembleia da República} algorithm 0. and 1. constitute the entire electoral system.
In each of these cases, algorithm 1. is the D'Hondt method. There is no national algorithm and thus the rest of the diagram is irrelevant. 

For compensatory two-tier systems, it is often natural to consider a total of $k$ seats in the constituency-apportionment and compare that to a total of $k+L$ seats in the national apportionment. 
The two apportionments are said to be \emph{congruent} if every party has achieved as many seats in the national apportionment as in the constituency-apportionment.  The additional seats (that one or more overrepresented parties have obtained in the constituency-apportionment) are called \emph{incongruent seats}. 

In most two-tier cases (including Iceland, Denmark, Norway, Sweden, Germany, Austria, New Zealand, etc.), the electoral algorithm terminates if there is no incongruence. 
In that case, the seat distribution of the $k+L$ seats is the final distribution. Each party is assigned compensatory seats corresponding to the difference between the number of national seats and constituency seats. This number is non-negative for all parties if and only if the seat distribution is congruent. 

This paper aims at studying the minimum number of compensatory seats $L$ needed for congruence. For a given election outcome this compensatory requirement is denoted by $L^*$.  In the next section it is defined formally. 

\subsection{Incongruence breakers}
Because our aim is to identify the number of compensatory seats needed for congruence we will only briefly describe possible incongruence-breakers here. 
As discussed in \cite{holdum2025impossibility}, usually it is the case that either the national or the regional level takes priority. 

In the former case, if the national apportionment is with respect to one of the standard methods, the entire electoral system is said to be \emph{guaranteed proportional}. 
In this case, which is employed in Sweden and Germany, some algorithm for redistribution of incongruent regional seats has to be devised. 

If the regional seats take priority and the number of compensatory seats is not sufficient, the incongruence breaker has to prioritize among the compensatory seats. This is most easily done if the national apportionment is by a divisor method, where the quotients give an ordering of the seats. If the national apportionment is using a quota method, iteration is often employed, see \cite{holdum2025impossibility} for additional discussion.

\begin{table}[htbp]
\centering
\small
\setlength{\tabcolsep}{4pt}
\begin{tabular}{@{}l >{\raggedright\arraybackslash}p{0.33\linewidth} l >{\raggedright\arraybackslash}p{0.33\linewidth}@{}}
\toprule
Symbol & Meaning & Symbol & Meaning \\
\midrule 
$c$ & number of constituencies
  & $k$ & total number of constituency (first-tier) seats, $k=\sum_{j} k_j = \sum_i r_i$ \\
$n$ & number of relevant parties (above the threshold $\tau$)
  & $L$ & number of compensatory seats \\
$\tau$ & national electoral threshold (in \%)
  & $L^{*}$ & minimal $L$ avoiding incongruence for a given election outcome \\
$v_i$ & total votes for party $i$, $1\le i\le n$
  & $k+L$ & total parliament size \\
$p_i$ & vote share for party $i$, $1\le i\le n$
  & $\Delta$ & seat surplus \\
$r_i$ & total number of regional seats for party $i$, $1\le i\le n$
  & $\bar p$ & average party constituency vote share, $\bar p = \frac{1}{c} \sum_{j=1}^c p_j$ \\
$k_j$ & number of constituency seats in constituency $j$, $1\le j\le c$
  & $x_j$ & seat discount, $x_j = k_j - \frac{v_j}{v}k$ \\
\bottomrule
\end{tabular}
\caption{Some of the notation used throughout the paper.}
\label{tab:notation}
\end{table}

\section{Definition of \texorpdfstring{$L^*$}{L*}, the number of compensatory seats needed to ensure proportionality.}\label{sec:definition_of_L*}

Apportionment methods are functions from votes to seats in a single constituency. The most important examples are the D'Hondt method (denoted $\mathtt{DH}$), the Sainte-Laguë method ($\mathtt{SL}$), the Hare Quota  largest remainder ($\mathtt{HQLR}$) and the Droop Quota ($\mathtt{DQ}$). 
Whenever an apportionment method, $\mathtt{M}$, apportions $k$ seats we write $\mathtt{M}^k$. For example,  $\mathtt{DH}^5$ is the D'Hondt method which apportions $5$ seats. Given a tuple of votes for $n$ parties $\mathbf{v} =(v_1,v_2, \dots, v_n)$, then $\mathtt{DH}^5(\mathbf{v}) = (s_1, \dots, s_n)$ is the corresponding tuple of seats. Furthermore, $\mathtt{DH}^5(\mathbf{v})_i = s_i$ is the number of seats for the $i$th party.  In this case, the sum of all the $s_i$ equals 5.

\subsection{Definition of \texorpdfstring{$L^*$}{L*}}

Fix an election outcome $\mathbf{v} = (v_1, \dots, v_n)$  consisting of national vote totals and a regional seat allocation $R= (r_1, \dots r_n)$ consisting of a total number of regional seats for each party. Let $k= \sum_i r_i$ be the total number of constituency seats.  For a national apportionment method $\mathtt{M}$. 
Write $\mathtt{M}^{k+L}(\mathbf{v})_i$ for the number of seats that $\mathtt{M}$ apportions to party $i$ when it distributes $k+L$ leasts using the national total.
 $L^*$ is the least number of levelling seats that make the regional and national seat allocations congruent.  That is the smallest $L$ where every party gets at least as many seats in the national allocation as the number of regional seats $r_i$. That is, 
\begin{align}\label{eq:definition_of_L*}
    L_{\mathtt{M}}^*(\mathbf{v},R) \equiv \min \{ L \geq 0 \mid \mathtt{M}^{k+L}(\mathbf{v})_i \geq r_i \text{ for each party } i \}. 
\end{align}
If $L^* \geq 1$ and $\mathtt{M}$ distributes $k+L^*-1$ seats, then at least one party has more regional seats than in the national allocation\footnote{If $\mathtt{M}$ is not house monotone there could be an incongruent $L > L^*$. This will not matter in the analysis below.}. We refer to any such party as the \emph{anchor} party (following the German notion of \emph{Ankerpartei} from \cite{schroder2014parteienproporz}).

We say that $L^*$ is the number of compensatory seats needed to \emph{ensure} proportionality (with respect to $\mathtt{M}$). 
As one would expect, the choice of national method $\mathtt{M}$ matters little in the definition of $L^*$ since the total parliament size ($k$ or $k+L$) is usually relatively large (compared to the average constituency size). 
This is confirmed in \Cref{tab:Lstar_national_methods}.
In contrast, it will turn out that $L^*$ is very sensitive to varying the regional algorithm. 

Throughout, $L^*$ is computed by adding compensatory seats to parliament one by one each time checking whether the regional and national apportionments are congruent. $L^*$ is the first time that happens. This is formalized in \Cref{alg:levelling}, which will be used repeatedly. 

\begin{table}[ht]
\centering
\begin{tabular}{llrrrrrr}
\toprule
Country & Election & $k_0$ & $c$ & $L^*_{\mathrm{DH}}$ & $L^*_{\mathrm{SL}}$ & $L^*_{\mathrm{Hare}}$ & $L^*_{\mathrm{Droop}}$ \\
\midrule
Denmark & 2022 & 135 & 10 & 22 & 25 & 25 & 24 \\
Latvia & 2022 & 100 & 5 & 6 & 8 & 8 & 7 \\
Norway & 2021 & 150 & 19 & 21 & 21 & 21 & 21 \\
Poland & 2023 & 460 & 41 & 57 & 59 & 59 & 58 \\
Portugal & 2024 & 230 & 22 & 7 & 7 & 7 & 7 \\
Spain & 2023 & 350 & 52 & 33 & 34 & 34 & 33 \\
Sweden & 2022 & 310 & 29 & 62 & 63 & 63 & 63 \\
Switzerland & 2023 & 200 & 26 & 20 & 21 & 21 & 20 \\
South Africa & 2019 & 200 & 9 & 4 & 5 & 5 & 4 \\
Turkey & 2023 & 600 & 87 & 85 & 86 & 86 & 85 \\
Peru & 2020 & 130 & 26 & 31 & 29 & 29 & 29 \\
Uruguay & 2019 & 99 & 19 & 20 & 20 & 21 & 20 \\
\bottomrule
\end{tabular}
\caption{Compensatory seats $L^*_{\mathtt{M}}(\mathbf{v}, R)$ with the actual number of constituency seats $k_0$ for the most recent election in the dataset of each country (with a 5\%  threshold imposed). The $k_0$ seats are apportioned across the $c$ constituencies by Sainte-Lagu\"{e} (algorithm 0). While the seats are apportioned to parties within each constituency by D'Hondt. Columns vary the national method $\mathtt{M}$ with respect to which congruence is measured.}
\label{tab:Lstar_national_methods}
\end{table}

\begin{algorithm}[t]
\caption{Minimal number of compensatory seats $L^{*}$}\label{alg:levelling}
\begin{algorithmic}[1]
\Input Total votes for each party $\mathbf{v} = (v_1, \dots, v_n)$.  Total number of one-tier seats for each party $(r_1, \dots, r_n)$. 
National method $\mathtt{M}$. 
\Inferred 
Total number of one-tier seats $k= \sum_{i=1}^n r_i$.  

\Output Minimal $L^{*}$ such that $\mathtt{M}$ applied to the national totals with
       house size $k+L^{*}$ gives every party at least its regional seats.
\State 
Initialize $L=0$
\If {$\mathtt{M}^{k+L}(\mathbf{v})_i \geq r_i$  for each $i= 1, \dots, n$} 
\State \Return $L^{*} \gets L$
\EndIf
\State $L \gets L+1$. 
\State Go to 2. 
\end{algorithmic}
\end{algorithm}

\section{A first analysis of $L^*$}

In this section, $L^*$ is related to proportionality, then rigorous bounds on $L^*$ in terms of the relative seat surplus are given. 
Counter-intuitively, the analysis shows that $L^*$ is smallest when the electoral method in the constituencies has a small bias towards the largest parties. 

\subsection{$L^*$ and proportionality indices}
Earlier work gauged whether the number of compensatory seats suffices for congruence using proportionality indices \cite{shugart1989seats} and \cite{bochsler2023balancing}. 
The index invented by Gallagher\footnote{In our notation, the Gallagher index on the national level is the number, 
$
    \mathtt{GI}(V,S) = \sqrt{\frac{1}{2}\sum_{i=1}^n (p_i - \frac{s_i}{k+L})^2}.  
$
In the plot this number is multiplied by 100.} \cite{gallagher1991proportionality} has now become standard alongside the older Loosemore-Hanby index \cite{loosemore1971theoretical}.

Compensatory seats reduce the disproportionality, but not to 0 since there will still be rounding errors because seats are integers. So where should the cut-off be? The definition of $L^*$ in \eqref{eq:definition_of_L*} circumvents the issue of introducing a cut-off of a proportionality index under which we can say that the electoral system is truly proportional. This is similar to the notion of guaranteed proportionality introduced\footnote{A two-tier system is \emph{guaranteed proportional} if the national totals always correspond to a specific weak-proportional apportionment method (any of the usual methods DH,SL,LR are weakly proportional \cite{balinski2010fair}).
} in \cite{holdum2025impossibility}.   

\Cref{fig:Gallagher_index_vs_levelling_seats} shows how the addition of compensatory seats reduces the Gallagher index up until the point where $L^*$ is reached.
\begin{figure}
     \centering
     \includegraphics[width=\linewidth]{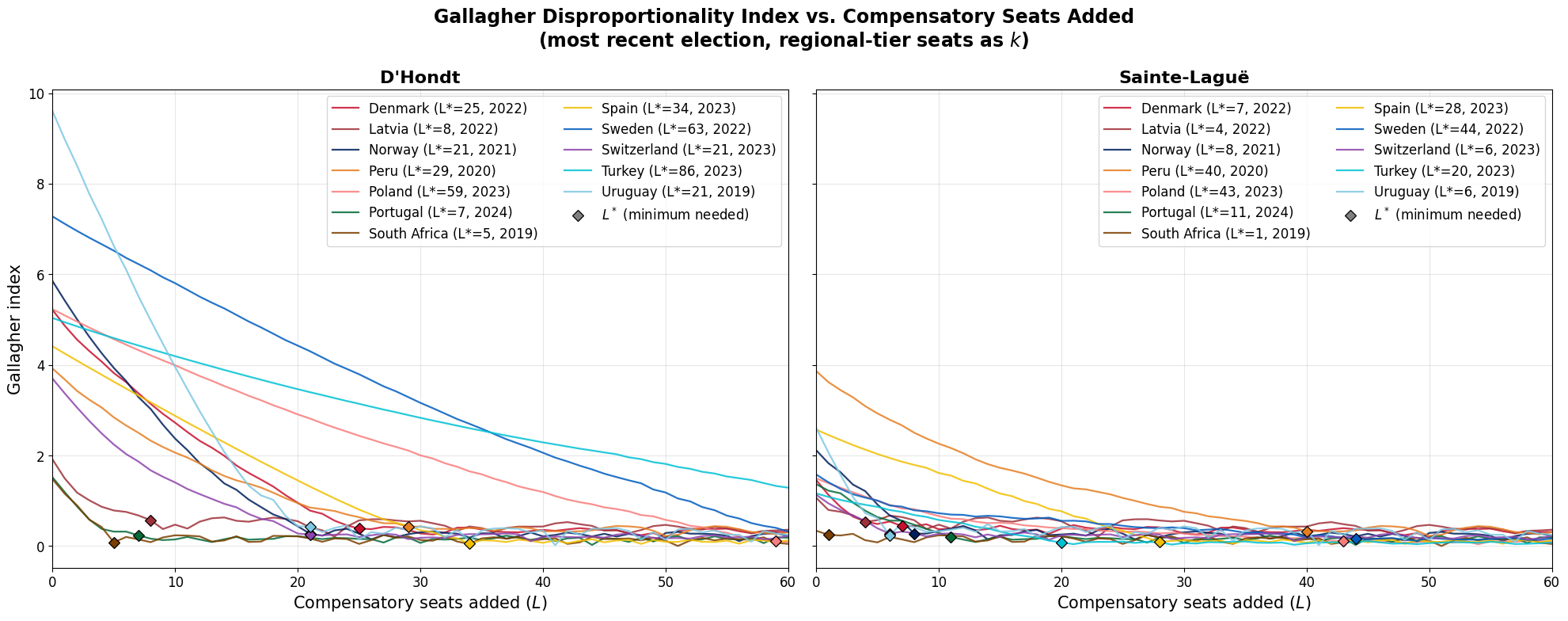}
     \caption{The value of the Gallagher index as a function of the number of compensatory seats added. 
     The index decreases roughly linearly until $L^*$ is reached (marked with diamonds).
     On the left, the D'Hondt method is used to apportion the seats in constituencies and on the right it is the Sainte-Laguë method. Nationally the seats are apportioned using Hare Quota (with iterations as the incongruence breaker, i.e., any party with incongruence is frozen at its regional seat count and new Hare Quota are calculated on the remainders). 
     As we will show later the numbers are generally larger for the D'Hondt method, where the method bias is larger.
\label{fig:Gallagher_index_vs_levelling_seats}}
 \end{figure}
The figure also gives a piece of policy advice: The first compensatory seats are the ones that matter most for the proportionality, since they are the ones most likely to be needed.

\subsection{The seat surplus and its relation to \texorpdfstring{$L^*$}{L*}}
Since the number of compensatory seats needed to ensure proportionality for the D'Hondt method and Droop quota is determined by how much seat surplus is accumulated across constituencies the seat surplus is of central importance. In general, if there are $k$ seats and the party gets a vote share $p$ of the votes its proportional share of the seats is $pk$. 
If the party obtains $s$ seats in the apportionment define the \emph{seat surplus}, $\Delta$ by 
\begin{align}\label{eq:seat surplus_definition}
  \Delta = s - pk.  
\end{align}
As the following proposition shows $L^*$ is more or less determined by the seat surpluses on the one-tier level. This motivates the extensive discussion of this seat surplus in \Cref{sec:seat surplus_accumulation}.

\begin{proposition}\label{prop:rigorous_bound_on_L}
Suppose that party $i$ gets share of the vote $p_i$ and has total seat surplus $\Delta_i$. Then, 
$$\max_{1\leq i \leq n} 
\frac{\Delta_i -1}{p_i} \leq L_{\mathtt{HQLR}}^* \leq  \max_{1\leq i \leq n} \frac{\Delta_i +1}{p_i}.$$ 
\end{proposition}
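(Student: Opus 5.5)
Throughout, $k=\sum_i r_i$ is the number of regional seats and $\Delta_i = r_i - p_i k$. The proof uses only the quota property of Hare Quota largest remainder. For a house size $h$, the method $\mathtt{HQLR}^h$ gives party $i$ either $\lfloor p_i h\rfloor$ or $\lceil p_i h\rceil$ seats. In particular,
\begin{align*}
p_i h - 1 < \mathtt{HQLR}^{h}(\mathbf{v})_i < p_i h + 1 .
\end{align*}
The plan is to compare $r_i$ with these two bounds at $h=k+L$, and to treat the lower and upper bounds on $L^*$ separately.

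\emph{Upper bound.} Put $L_0=\max_i (\Delta_i+1)/p_i$ and take any integer $L\ge L_0$. For every party $i$ we then have $p_i L \ge \Delta_i+1$, which gives
\begin{align*}
\mathtt{HQLR}^{k+L}(\mathbf{v})_i > p_i(k+L)-1 \ge p_i k + \Delta_i = r_i .
\end{align*}
So every party gets at least its regional seats, and the apportionments are congruent. Taking $L=\lceil L_0\rceil$ (or $L=0$ if $L_0\le 0$) gives $L^*\le \lceil L_0\rceil$.

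The strict inequality allows the sharper statement $L^*\le L_0$. Indeed, for any integer $L > L_0-1$ we have $p_iL > \Delta_i+1-p_i$. That is only marginally weaker, so I would simply settle for the ceiling, or note that the statement should be read up to rounding.

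This step uses only the quota property and needs no house monotonicity. If $\mathtt{HQLR}$ is not house monotone (Alabama paradox), the bound still holds, because $L^*$ is a minimum and is therefore at most any congruent $L$.

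\emph{Lower bound.} Let $L=L^*$. By the definition \eqref{eq:definition_of_L*}, every party satisfies $r_i \le \mathtt{HQLR}^{k+L^*}(\mathbf{v})_i < p_i(k+L^*)+1$. Substituting $r_i=p_ik+\Delta_i$ gives $\Delta_i < p_i L^* + 1$, so $L^* > (\Delta_i-1)/p_i$ for each $i$. Taking the maximum over $i$ gives the claim, in fact with strict inequality. No monotonicity is needed here either, since congruence holds at $L^*$ itself.

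\emph{Main obstacle.} The only delicate point is pinning down the rounding conventions so that the inequalities land exactly as stated.
\begin{itemize}
\item The key input is that HQLR satisfies quota at every house size, which I would state and justify briefly: each party receives its floor and at most one extra seat.
\item Parties with $p_i=0$ must be excluded from the maxima, or the maxima restricted to parties with positive vote share.
\item The upper bound should be read as a bound on the integer $L^*$ via the ceiling. If exact non-integer equality is required, I would use the sharpened form $L^* \le \lceil L_0 - 1 \rceil + 1$ derived above.
\end{itemize}
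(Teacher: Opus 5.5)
Your approach is the paper's: both bounds come from the quota property of $\mathtt{HQLR}$ at house size $k+L$. Your lower bound, read off at $L=L^*$ with strict inequality, is the contrapositive of the paper's ``sufficient condition for incongruence'' step, and it is fine.

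\textbf{The gap is in the upper bound.} You settle for $L^*\le\lceil L_0\rceil$ and suggest reading the statement up to rounding. That only proves a weaker claim than the one stated, and the concession is unnecessary. Your proposed ``sharpened form'' $\lceil L_0-1\rceil+1$ is identical to $\lceil L_0\rceil$, so it sharpens nothing.

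\textbf{The missing idea is that seat counts are integers.} You already wrote $p_iL>\Delta_i+1-p_i$ for any integer $L>L_0-1$. Combine this with $\mathtt{HQLR}^{k+L}(\mathbf v)_i>p_i(k+L)-1$ and $r_i=p_ik+\Delta_i$. This gives $\mathtt{HQLR}^{k+L}(\mathbf v)_i>r_i-p_i\ge r_i-1$. Both sides are integers, so $\mathtt{HQLR}^{k+L}(\mathbf v)_i\ge r_i$.

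Equivalently, $\mathtt{HQLR}^{k+L}(\mathbf v)_i\ge\lfloor p_i(k+L)\rfloor$ and $r_i\in\mathbb{Z}$. So congruence already holds once $p_iL\ge\Delta_i$ for every $i$.

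Now take $L=\lfloor L_0\rfloor$. This is an admissible choice. The $p_i$ are shares among the $n$ apportioned parties, so $\sum_i p_i=1$ and $\sum_i\Delta_i=0$. Hence some $\Delta_j\ge 0$, and $L_0\ge 1/p_j\ge 1$. You then get $L^*\le\lfloor L_0\rfloor\le L_0$, which is the statement exactly as written.

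The paper's own proof glosses the same point. It passes from ``congruent whenever $p_i(k+L)-1\ge kp_i+\Delta_i$ for all $i$'' straight to $L^*\le\max_i(\Delta_i+1)/p_i$, without noting that $L$ must be an integer. So your caution points at a real looseness in the paper as well. The integrality step above closes it in both places.
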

\begin{proof}
Since Hare Quota is within quota\footnote{See \cite{balinski2010fair} for an extensive discussion of this notion.}  if the $i$th party gets $p_i$ share of the votes then their number of seats in the national allocation $\mathtt{HQLR}^{k+L}(\mathbf{v})_i$ is either $\lfloor p_i(k+L) \rfloor$ or $\lceil p_i(k+L) \rceil$.
In particular, 
$$
 p_i(k+L)-1 \leq \mathtt{HQLR}^{k+L}(\mathbf{v})_i \leq p_i(k+L)+1.
$$
By definition, see \eqref{eq:seat surplus_definition}, the party obtains $r_i = kp_i + \Delta_i $ regional seats.
The condition for congruence is, that for all $i$, 
$\mathtt{HQLR}^{k+L}(\mathbf{v})_i \geq r_i.$ 
This is always satisfied if for all $1\leq i \leq n$, 
$$
 p_i(k+L)-1 \geq kp_i + \Delta_i. 
$$
Thus, $L_{\mathtt{HQLR}}^* \leq \max_{1\leq i \leq n}\frac{\Delta_i +1}{p_i}$. 

Conversely, a sufficient condition for incongruence is that there exists an $i$ such that
$$
 p_i(k+L)+1 < kp_i + \Delta_i.
$$
Thus, for congruence it is necessary that for all $i$ it holds that $\frac{\Delta_i-1}{p_i} \leq L_{\mathtt{HQLR}}^*$. And so $\max_i \frac{\Delta_i-1}{p_i} \leq L_{\mathtt{HQLR}}^*. $
\end{proof}

\Cref{prop:rigorous_bound_on_L} is best illustrated when the apportionment methods can be varied continuously. This is achieved by the $\beta$-divisor methods and $\gamma$-quota methods. 

\subsection{Definition of $\beta$-divisor methods and $\gamma$-quota methods}\label{sec:beta_and_gamma}

Throughout we will consider the framework of $\beta$-divisor methods  and $\gamma$-quota methods, which generalize many of the apportionment methods in use. 
If the reader is familiar with the methods of D'Hondt ($\beta =1$), Sainte-Laguë ($\beta = \frac{1}{2}$), Droop ($\gamma =1$) and Hare ($\gamma =0$), but not interested in the mathematical generalities this subsection can be skipped upon taking note of the parentheses above in the following. 

For the introduction we follow the excellent references  \cite{janson2014asymptotic, pukelsheim2017proportional}.
Combining the notation of \cite{janson2014asymptotic} and \cite{holdum2025impossibility}, we define the $\beta$-divisor method $ \mathtt{Div}^{k}_\beta, $
as the function which apportions $k$ seats to a single-constituency election outcome $(v_1, \dots, v_n)$ such that the number of seats of party $i$ is given by
$$
s_i = \lceil \frac{v_i}{D} - \beta \rceil 
$$
where the divisor $D$ is chosen such that $k$ seats are apportioned, i.e., $\sum_{i=1}^n s_i = k$. 
When $\beta = 1$ this method corresponds to the D'Hondt method $\mathtt{DH}^k$ and when $\beta = \frac{1}{2}$ this corresponds to the Sainte-Laguë method. 
There are many other divisor methods, most importantly Huntington's, but it does not fit into the framework of $\beta$-divisor methods (where the bias can be quantified and which is particularly nice \cite{balinski2014parametric}). 

Similarly, we define the $\gamma$-quota method $\mathtt{Quota}_\gamma^k$  as the function which apportions $k$ seats for the election outcome  $(v_1, \dots, v_n)$  with $V = \sum_{i=1}^n v_i$ through the formula 
$$
s_i = \lceil  (k+ \gamma) \frac{v_i}{V}  - \alpha \rceil 
$$
where $\alpha$  is chosen such that $\sum_{i=1}^n s_i = k$. 

Important special cases include $\gamma =0$, $\mathtt{Quota}_0^k= \mathtt{HQLR}^k$  which is known as the largest remainder method and $\gamma =1$, which is known as Droop quota. To see how the definitions given here correspond to the many other possible definitions see \cite[Appendix A, \& B]{janson2014asymptotic}.

\subsection{$L^*$ is minimized for a slightly biased apportionment method}
In \Cref{fig:dansk_overrep}, $L^*(\beta)$ is plotted for the Danish election of 2022.
In the regime where the smaller parties are overrepresented regionally, the gap between the upper bound and lower bound in \Cref{prop:rigorous_bound_on_L} is not that tight. Conversely, when the bias of the regional ($\beta$-divisor) method is larger, then the largest party is decisive for the bound on $L^*$ and in that case the interval in \Cref{prop:rigorous_bound_on_L} is tighter.

The analysis shows that since the seat surplus is measured in absolute terms (number of extra seats) fewer compensatory seats are needed to level out in case a large party is overrepresented compared to a smaller party. 
Thus, it is the relative seat surplus that matters for the number of compensatory seats.

\begin{figure}
    \centering
\includegraphics[width=0.7\linewidth]{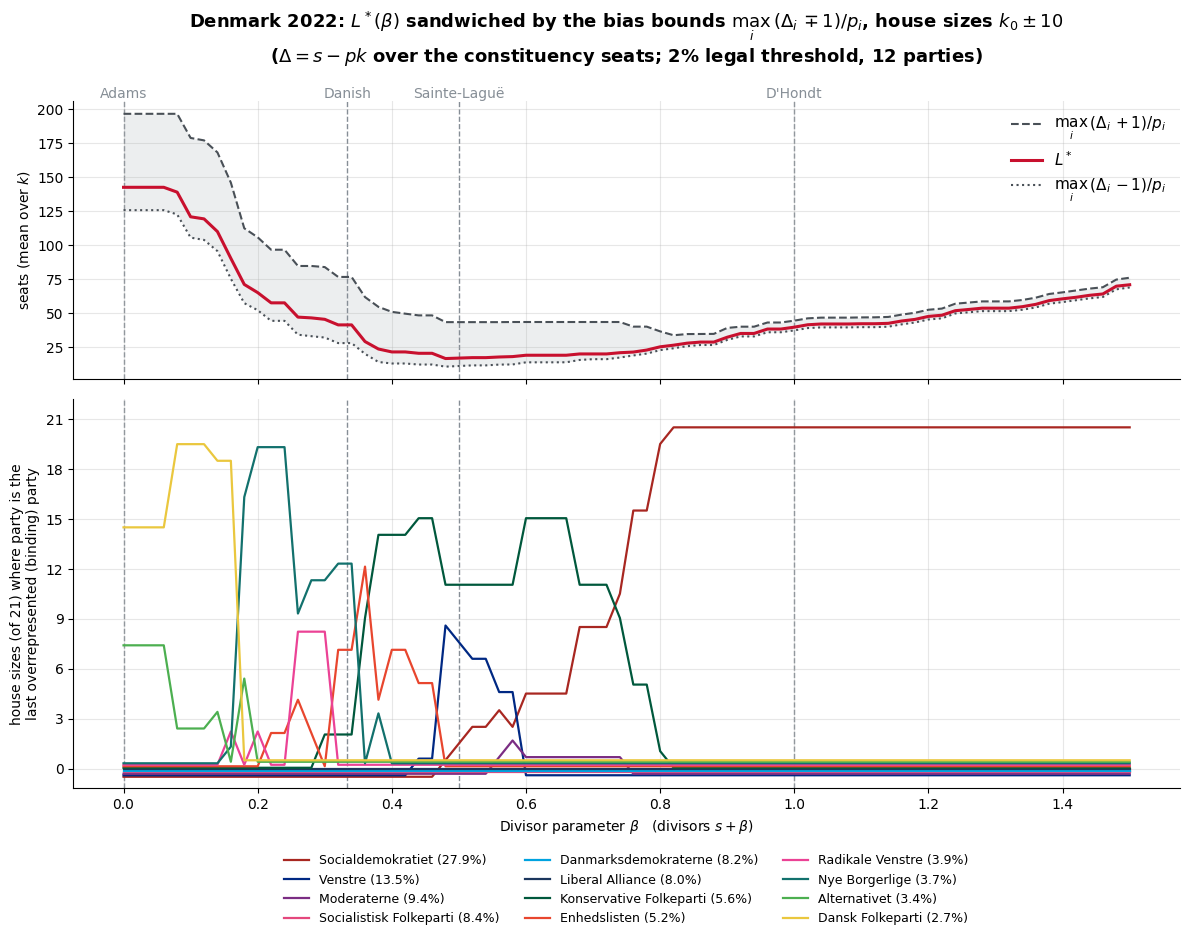}
    \caption{The Danish election of 2022 with regional method in the one-parameter family of $\beta$-divisor methods. For each value of $\beta$ the number of constituency seats is varied from 125 to 145 and apportioned based on the vote totals using the $\mathtt{SL}$-method.  Top) The number of compensatory seats (averaged over the 21 house sizes) vs. the theoretical guarantees given the seat surplus.  Bottom) The anchoring parties and their  percentages (renormalised after the 2\% threshold). When $\beta$ is small, mainly small parties are anchor parties and when $\beta$ is large the (large) party \emph{Socialdemokratiet} is anchoring.  }
    \label{fig:dansk_overrep}
\end{figure}

It is natural to ask for which regional method the system needs the least compensatory seats. 
Intuitively, one might think that using an unbiased method (Sainte-Laguë or Hare Quota) in the constituencies would lead to the least seat surplus accumulation.
But since there are also some fluctuations, the smallest relative seat surplus will usually not occur for the unbiased method, but rather for a method which is slightly biased towards larger parties. 

The $\beta$-divisor methods have an asymptotic bias proportional to $\beta - \frac{1}{2}$, see \eqref{eq:expected_bias} below. 
\Cref{fig:Lstarofbeta} shows that $L^*$ seems to be lowest for a slightly biased method with a $\beta$ of around $0.6$. The picture is less clear for quota methods. 

\begin{figure}
    \centering
    \includegraphics[width=0.48\linewidth]{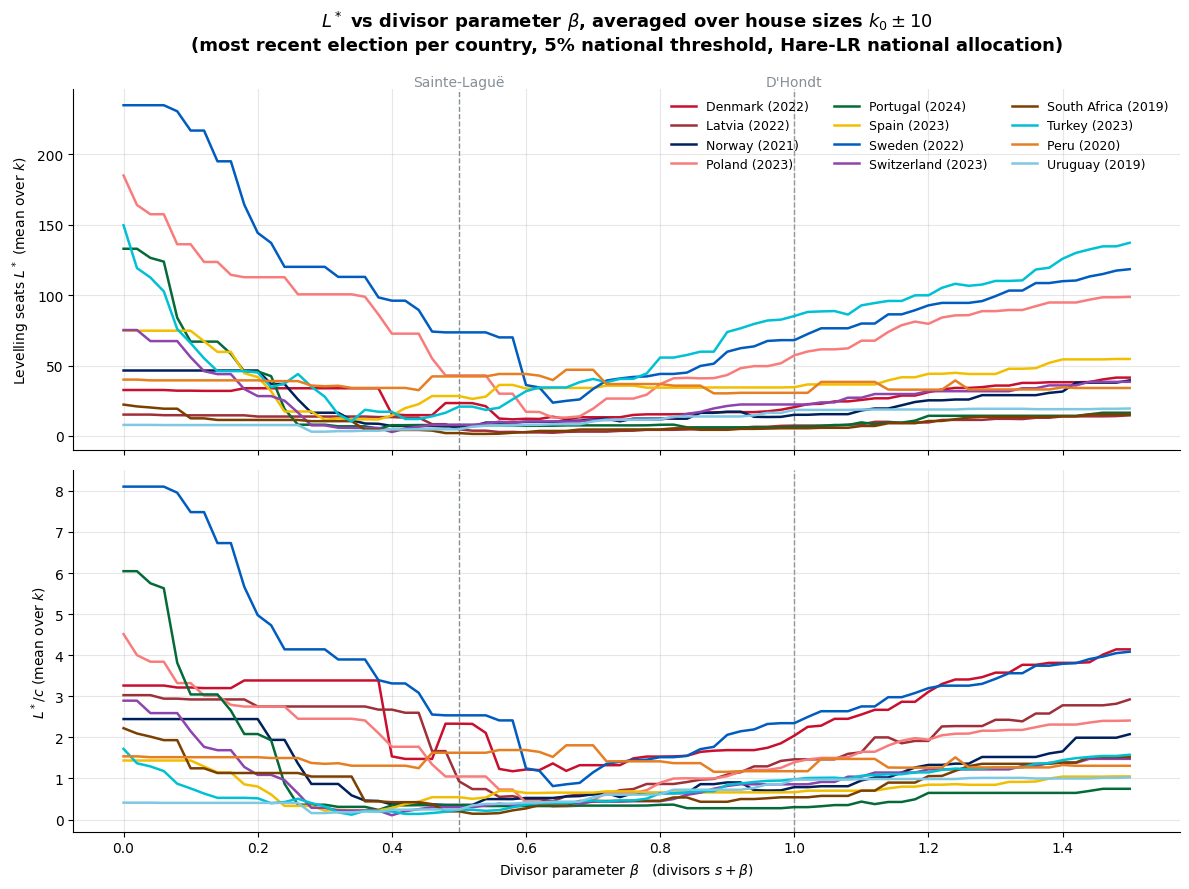}
        \includegraphics[width=0.48\linewidth]{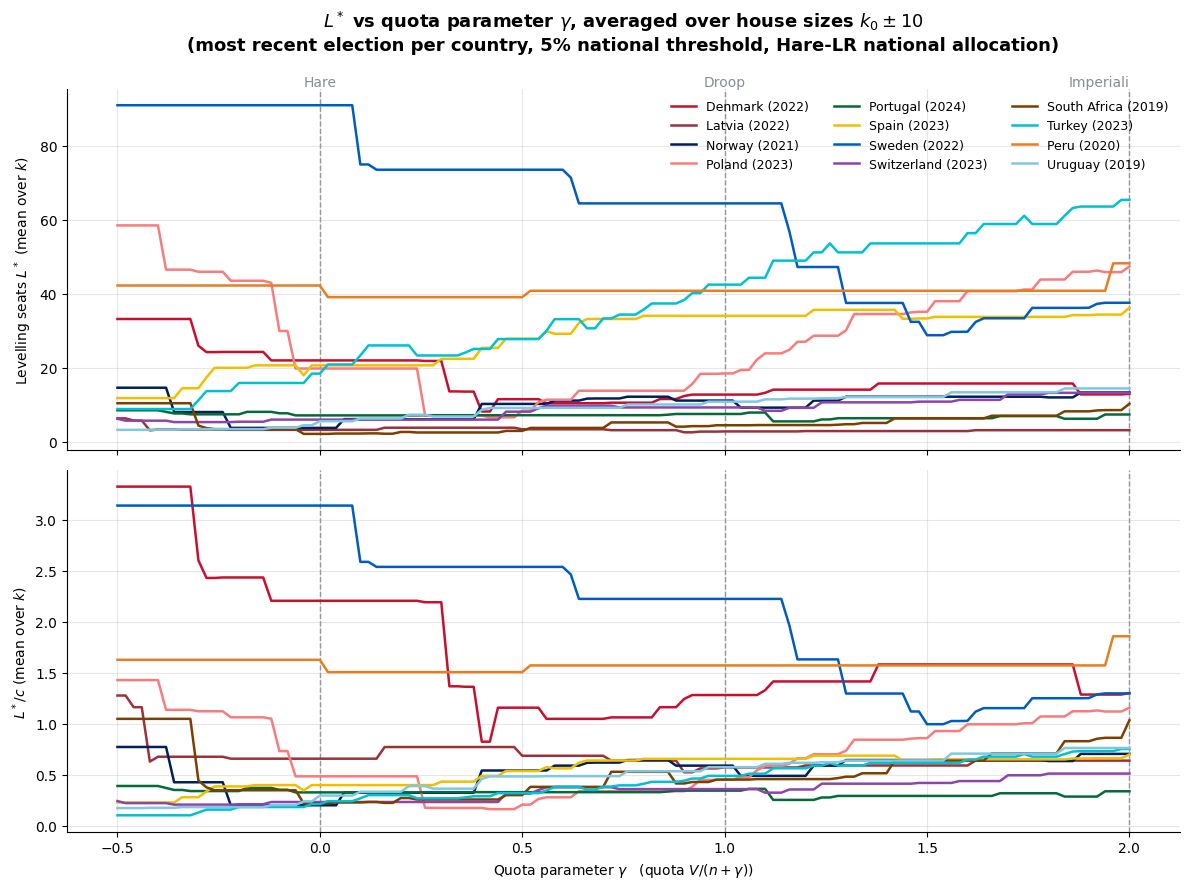}
    \caption{Evaluation of $L^*(\beta)$ and $L^*(\gamma)$ for the most recent election (in the dataset) for 12 countries of interest. The results are average over 21 house sizes around the current number of regional seats.  Note that the numbers are significantly larger for the divisor methods than for quota methods. The data shows that the values of $\beta$ and $\gamma$ that minimize $L^*$ are not the unbiased values, but values that have a (very slight) bias towards larger parties.}
    \label{fig:Lstarofbeta}
\end{figure}

\section{Expected number assuming the asymptotic bias} \label{sec:expected_bias}
The study of the bias of the D'Hondt method started more than 100 years ago \cite{polya1919proportionalwahl} and is still active \cite{schuster2003seat}, \cite{bochsler2010gains}, \cite{janson2014asymptotic}.  

In \cite{flis2020pot}, an empirical seat surplus formula was derived for one-tier systems using the D'Hondt method. We propose the following generalization to $\beta$-divisor methods, 
\begin{align}\label{eq:expected_bias}
    \mathbb{E}[\Delta_i] = nc (\beta- \frac{1}{2})(p_i - \frac{1}{n}). 
\end{align}
This formula, which was empirically derived in \cite{flis2020pot} for $\beta = 1$, can also be derived by summing up the expectations in \cite{janson2014asymptotic}. I.e., the single district biases add up across $c$ constituencies and that the vote share of the party does not vary too much across constituencies. 
For the D'Hondt method, the formula can be proven in specific circumstances \cite{boratyn2025seat} and it is an interesting avenue for future work to pin down the conditions for the general formula as well.

Across electoral systems the accuracy of the seat surplus formula varies. In \cite{flis2020pot} the formula in \eqref{eq:expected_bias} was reviewed empirically for the D'Hondt method $\beta =1$, which is also our main use case for the formula.

For the Danish case, this was assessed for the D'Hondt method in \cite{klausensandsynligheden}. There, it was shown how the formula \eqref{eq:expected_bias} became more precise when the size of the constituencies increased. It was shown how the $\beta =1$ formula fit very well in the Danish case following the electoral reform in 1970, but not before that.  The reason for that was that the rurally strong party Venstre benefited from a distortion of the local seats that gave rural areas larger representation. 

This example highlights how it is always regionally strong parties that are the enemy of the formula, as it was also discussed in  \cite{flis2020pot, boratyn2025seat}. 

If the regions are very heterogeneous such as the Kurdish-dominated Turkish South East  \cite{evci2020regional} or the Copenhagen constituency in Denmark \cite{klausensandsynligheden} it has been discussed how it makes sense to use the formula regionally. 
When discussing the rigorous bounds, we will be concerned with the party performance in each constituency, which is in a sense a further 
fine-graining of this method. 

Further empirical investigation of this formula, especially for other values of $\beta$ than 1 is an interesting avenue for further research that would also strengthen the conclusions of this paper. 

\subsection{The number of required compensatory seats assuming the bias for the D'Hondt method}
In the following, we assume the formula for the seat surplus \eqref{eq:expected_bias} and derive approximately how many compensatory seats are needed to ensure proportionality.
Recall a party is anchoring if it has incongruent seats with $L^* -1$ compensatory seats. 

If the malapportionment is small and the regional method has a bias towards the largest party (e.g., $\beta > \frac{1}{2}$, corresponding to the D'Hondt method at $\beta=1$) the largest party will often be the anchor party.
Thus, it makes sense to consider which $L$ is needed for the largest party to be congruent.

Since the election is assumed to be guaranteed proportional (and the number of seats is typically much larger than the number of seats in each constituency), one would expect that a party with vote share $p$ should obtain around $p (k+L)$ seats in total\footnote{This is of course not the case if a significant proportion of the vote is wasted on an electoral threshold.}.
On the other hand, the party should expect a total number of $pk + \Delta$ constituency seats. The condition for incongruence is having more constituency seats than seats in the national apportionment. Using  \eqref{eq:expected_bias} this condition becomes, 
\begin{align}
  p (k+L) <  pk +  nc (\beta- \frac{1}{2})(p - \frac{1}{n}) 
\end{align}
We expect that the $L$ for which these two expressions are (approximately) equal is  $L^*$, the smallest number of compensatory seats needed to avoid incongruence,  
\begin{align}
    p \mathbb{E}[L^*]= nc (\beta- \frac{1}{2})(p - \frac{1}{n}),
\end{align}
which is again equivalent to
\begin{align}\label{eq:expected_number_of_levelling_seats}
      \boxed{\frac{\mathbb{E}[L^*]}{c}= (\beta- \frac{1}{2})n(1- \frac{1}{np}).}
\end{align}
Recall, that the D'Hondt method $\beta =1$. Here $\frac{L^*}{c}$ denotes the number of compensatory seats per constituency. The number $1-\frac{1}{np}$ can be interpreted as a measure of how much larger the largest party is than the average party. 
This party surplus number increases with the number of parties and with the size of the largest party. We emphasize that it is very different from the effective number of parties \cite{laakso1979effective}, which is usually discussed in the literature. Notice how $k$ disappeared in the calculation\footnote{The number $k$ does play a role implicitly, since for instance if $k=c$, the method is equivalent to first past the post. In that case the assumed formula \Cref{eq:expected_bias} breaks down.}.  Instead the number of constituencies is central. 

Generally, we expect the formula \eqref{eq:expected_bias} to be most precise when $\beta$ is large and when the average constituency size is not too small. This is because only in this case the method bias is the most important factor. For smaller values of $\beta$, as for example the SL method, the fluctuations are dominant.

Even more importantly, we only expect the largest party to be decisive for the incongruence when the method in the constituencies favours the largest party, see for example the crossover in the anchor party (decisive for incongruence ) in \Cref{fig:dansk_overrep} as a function of $\beta$. 

\begin{remark}
    The size of the expected overrepresentation in seats is
$\max(  nc (\beta- \frac{1}{2})(p - \frac{1}{n}) - Lp, 0)$. 
\end{remark}

\subsection{Expected number of required compensatory seats for quota methods}
For a $\gamma$-quota method the expected asymptotic bias in a single constituency was calculated in \cite[Theorem 3.11]{janson2014asymptotic} to be $\gamma(p - \frac{1}{n})$.

When all parties are in play to get local seats in all constituencies, the malapportionment is small, and the party in question has homogeneous support then we expect that the seat surplus accumulates across the $c$ constituencies, in which case 
\begin{align}\label{eq:expected_bias_quota_method}
    \mathbb{E}[\Delta] = c \gamma(p - \frac{1}{n}). 
\end{align}
Assuming overall proportionality yields the following inequality as an indicator of incongruence,
\begin{align*}
  p (k+L) <  pk +  c \gamma(p - \frac{1}{n}). 
\end{align*}
Solving the case of equality yields the number of required compensatory seats 
\begin{align} \label{eq:expected_number_of_levelling_seats_quota}
   \boxed{\frac{\mathbb{E}[L^*]}{c} =  \gamma(1 - \frac{1}{np}).} 
\end{align}

\begin{remark}
    The accuracy of the derived formulas for the expected number of compensatory seats in \eqref{eq:expected_number_of_levelling_seats} and \eqref{eq:expected_number_of_levelling_seats_quota} depends on the party systems and they are not rigorous guarantees. 
    In the unrealistic limit where the number of seats in each constituency $k_j$ all tend to infinity, analogs of \eqref{eq:expected_bias} and \eqref{eq:expected_bias_quota_method} (adjusted for the support of the parties in each constituency) become exact by the proofs of Janson \cite{janson2014asymptotic}. But the fluctuations do matter to obtain the distribution of \Cref{prop:rigorous_bound_on_L}, so therefore we do not expect \eqref{eq:expected_number_of_levelling_seats} and \eqref{eq:expected_number_of_levelling_seats_quota} to become exact in any limit. 
    
    Similarly, the methods of \cite{boratyn2025seat} can potentially be used to derive rigorous statements under another set of assumptions.
\end{remark}

\section{Malapportionment}\label{sec:malapportionment}



This section formalizes the role that \emph{malapportionment} plays in the seat surplus accumulation. 
With other aspects of seat surplus accumulation (cf. \Cref{fig:one-tier_mechanisms})  held fixed, parties that are strong where the seats are cheap will obtain a larger seat surplus than parties that are strong where the seats are expensive. 
It turns out that a correction to the seat surplus formula can be precisely described. 
 The perspective is quite different from the conventional approach of defining an index of apportionment as in \cite{samuels2001value}; it is for example shown how malapportionment is irrelevant for parties with uniform support.

\Cref{tab:seat-apportionment}  shows how the apportionment is accomplished in practice in our chosen set of countries. 
Since the constituency seats are not apportioned with respect to the number of votes in that particular election, it will necessarily be the case that some constituency seats are "cheaper" than others. There is however a general correspondence to either the population or the number of voters, which is of course correlated with the number of votes cast.

The vote share of the $i$th party in the $j$th constituency is denoted $p_i^j$ and the average vote share of the $i$th party $\bar p_i = \frac{1}{c} \sum_{j=1}^c p_i^j.$
Recall, that there are $k_j$ seats and $v_j$ votes in the $j$th constituency. 

Define the \emph{seat discount} $x_j = k_j - \frac{v_j}{v}k$. If $x_j$ is positive it means that the seats in constituency $j$ are "cheaper" than the average constituency. If seats are distributed across constituencies with $\mathtt{HQLR}$ then necessarily $-1 \leq x_j \leq 1$. In the empirical analysis below we use $\mathtt{SL}$ where it is almost always the case.  
Recall, the definition of the covariance $\Cov(p,x) = \mathbb{E}[(p-\mathbb{E}[p])(x-\mathbb{E}[x])]$, here $\mathbb{E}$ is the average over constituencies, e.g., $\mathbb{E}[x] = \frac{1}{c}\sum_{j=1}^c x_j = \frac{1}{c}\sum_{j=1}^c k_j - \frac{v_j}{v}k  = 0$ and $\mathbb{E}[p_i] = \frac{1}{c}\sum_{j=1}^c p_i^j$. 

The following lemma shows that the additional seat surplus that a party gets nationally compared to the summed out surpluses from the constituencies is exactly the covariance of the vote share with the seat discount $x_j$.

\begin{lemma}\label{lemma:covariance_is_difference}
  Label a party with $1$, and suppose it obtains $v_1^j$ votes in the $j$th constituency.  Then\footnote{Using Hölder's inequality this number can be bounded by $\leq \min \left( 2k (\max_j p_1^j)\cdot d_{TV}(\mathcal{K},\mathcal{V}), c \bar p_1 \max_j \abs{k_j - \frac{v_j}{v}k} \right)$.
    In particular, if the regional seats are distributed using a quota method. Then 
    $\abs{ \sum_{j=1}^c \Delta^j_1 - \Delta_1} \leq  c \bar p_1.$}
$$
\Delta_1 - \sum_{j=1}^c \Delta_1^j  = c\Cov(p_1, x).   
$$

\end{lemma}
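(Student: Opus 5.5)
The plan is a direct algebraic computation: write both sides in terms of the constituency-level quantities and watch the seat counts cancel. First fix the notation implicit in the statement. In constituency $j$ party $1$ obtains $s_1^j$ seats with vote share $p_1^j = v_1^j/v_j$, so the local seat surplus is $\Delta_1^j = s_1^j - p_1^j k_j$ by \eqref{eq:seat surplus_definition}. Nationally the party has $r_1 = \sum_{j=1}^c s_1^j$ regional seats and vote share $p_1 = v_1/v$, with $v_1 = \sum_j v_1^j = \sum_j p_1^j v_j$. Hence $\Delta_1 = r_1 - p_1 k$.

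Next, subtract the summed local surpluses from the national surplus. The seat counts $\sum_j s_1^j$ cancel exactly, which leaves
\begin{align*}
\Delta_1 - \sum_{j=1}^c \Delta_1^j = \sum_{j=1}^c p_1^j k_j - k\sum_{j=1}^c p_1^j \frac{v_j}{v} = \sum_{j=1}^c p_1^j\Big(k_j - \frac{v_j}{v}k\Big) = \sum_{j=1}^c p_1^j x_j ,
\end{align*}
using the definition of the seat discount $x_j$.

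Finally, identify this sum as $c$ times the covariance, with $\mathbb{E}$ the uniform average over constituencies. The key fact, already noted before the lemma, is that $\mathbb{E}[x] = \frac{1}{c}\big(\sum_j k_j - k\sum_j v_j/v\big) = 0$, because $\sum_j k_j = k$ and $\sum_j v_j = v$. Therefore $\sum_j (p_1^j - \bar p_1)(x_j - \mathbb{E}[x]) = \sum_j p_1^j x_j - \bar p_1 \sum_j x_j = \sum_j p_1^j x_j$, and this equals $c\Cov(p_1,x)$.

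There is no real obstacle beyond bookkeeping. The only delicate point is being careful that the national share is the vote-weighted average $\sum_j p_1^j v_j/v$ rather than the unweighted $\bar p_1$. The covariance uses the unweighted average, and the two are reconciled exactly through $\mathbb{E}[x]=0$.

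For the footnote bounds, I would apply H\"older to $\sum_j p_1^j x_j$ in two ways:
\begin{itemize}
\item with $\|p_1\|_\infty \|x\|_1$, where $\|x\|_1 = 2k\, d_{TV}(\mathcal{K},\mathcal{V})$;
\item with $\|p_1\|_1 \|x\|_\infty = c\bar p_1 \max_j |x_j|$.
\end{itemize}
In the quota case, $|x_j|\le 1$ then gives the bound $c\bar p_1$.
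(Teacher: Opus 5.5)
Your proof is correct and follows the paper's argument essentially line for line. The seat counts cancel, you rewrite the national share as $\sum_j p_1^j v_j/v$ to obtain $\sum_j p_1^j x_j$, and $\sum_j x_j = 0$ lets you centre $p_1^j$ at $\bar p_1$ to get $c\Cov(p_1,x)$; your H\"older treatment of the footnote, using $\|x\|_1 = 2k\, d_{TV}(\mathcal{K},\mathcal{V})$ and $\|p_1\|_1 = c\bar p_1$, also supplies the justification the paper only states.
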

\begin{proof}
By definition of the local seat surplus and the global seat surplus:
\begin{align*}
    \Delta_1 -\sum_{j=1}^c \Delta_1^j  = 
    s_1 - \frac{v_1}{v}k  -  \sum_{j=1}^c  (s_1^j - \frac{v_1^j}{v_j}k_j)=     - \frac{v_1}{v}k  +  \sum_{j=1}^c  \frac{v_1^j}{v_j}k_j
    =      \sum_{j=1}^c k\frac{v_1^j}{v_j}(\frac{k_j}{k} - \frac{v_j}{v})
        =      \sum_{j=1}^c (p_1^j - \bar p_1)(k_j- k\frac{v_j}{v}),
\end{align*}
where we used in the last step that $\sum_{j=1}^c \bar p_1(k_j- k\frac{v_j}{v})=0$. 
\end{proof}
Now, $c\Cov(p_1, x)$ can be bounded if either the vote shares of party $1$ are quite uniform or if the distribution of seats to constituencies is quite proportional. 
If for instance the constituency seats are distributed with Hare Quota, then $\abs{k_j- k\frac{v_j}{v}} \leq 1$ and the whole expression is bounded by $c \bar p_1.$
On the other hand, if party 1 has uniform geographic support, then $p_1^j = p_1$ and the sum is 0.

\Cref{tab:cov_seat_price} shows how the effects of malapportionment depend heavily on context. On the left, in the Danish case, the values are generally less than 1 seat, albeit larger for the actual area-dependent apportionment (cf. \Cref{tab:seat-apportionment}). On the right, the HDP and YSP with their stronghold in the Turkish South-East where the seats are cheaper win on that account, while the CHP, whose strongholds are the big cities, lose dramatically.  


\begin{table}[t]
\centering
\caption{Apportionment of regional seats across districts. The references have been found using AI and the relevant links are given in the references and the linked sources have been double-checked with the table amended in many circumstances. The * denotes circumstances where it was particularly difficult to obtain the precise information. }
\label{tab:seat-apportionment}
\small
\begin{tabular}{@{}l p{3.9cm} p{2.7cm} p{4.4cm} l@{}}
\toprule
Country & Basis & Method & Extra rules & Ref. \\
\midrule
Denmark  & population $+$ votes at last election $+$ $20\cdot$area (km$^2$) & Hare-LR & recomputed every 5 years; 135 seats over 10 storkredse ($+2$ Faroe Islands, $+2$ Greenland) & \cite{folketingsvalgloven} \\
Norway  & population $+$ $1.8\cdot$area (km$^2$) & Sainte-Lagu\"e & floor of 4.  1 of each district's seats is a compensatory seat (19 of 169) & \cite{valgloven2023} \\
Sweden  & voters (on 1 March of election year) & Hare-LR & 310 fixed seats recomputed before every election; 39 adjustment seats float nationally & \cite{vallag2005} \\
Spain  &  population & Hare-LR & floor of 2 per province; Ceuta and Melilla 1 each & \cite{loreg1985} \\
Portugal  & voters & D'Hondt & $2+2$ emigrant seats (Europe / outside Europe) & \cite{lei14de1979} \\
Poland  & population & Close to Hare-LR & --- & \cite{kodekswyborczy2011} \\
Switzerland* & permanent-resident population & complicated & floor of 1 per canton & \cite{bpr1976} \\
Latvia  &  voters & Hare-LR & fixed 5 months before the election or on election day (snap election) & \cite{saeimaelectionlaw1995} \\
Turkey  & population & complicated & floor of 1 per province; large provinces split up & \cite{turkeylaw2839} \\
Peru  & voters & "proporcional" & floor of 1 per district, max 2 for overseas district & \cite{ley26859} \\
Uruguay* & registered voters & proportional/unclear & constitutional floor of 2 per department & \cite{uruguayconst1967} \\
South Africa* & registered voters & Commission determination & ---   & \cite{saelectoralact1998} \\
\bottomrule
\end{tabular}
\end{table}
\begin{table}[ht]
\centering
\small
\setlength{\tabcolsep}{3.5pt}
\begin{subtable}[t]{0.52\textwidth}
\centering
\begin{tabular}{l rr rr rr}
\toprule
 & \multicolumn{2}{c}{2015} & \multicolumn{2}{c}{2019} & \multicolumn{2}{c}{2022} \\
\cmidrule(lr){2-3} \cmidrule(lr){4-5} \cmidrule(lr){6-7}
Party & actual & SL & actual & SL & actual & SL \\
\midrule
Socialdemokratiet & 0.10 & 0.00 & 0.32 & 0.04 & 0.20 & 0.05 \\
Venstre & 0.20 & -0.10 & 0.32 & 0.00 & 0.13 & 0.00 \\
Dansk Folkeparti & 0.16 & -0.06 & 0.16 & 0.03 & 0.03 & 0.02 \\
Radikale Venstre & -0.12 & 0.04 & -0.31 & -0.02 & -0.11 & -0.02 \\
Moderaterne & -- & -- & -- & -- & -0.10 & 0.02 \\
Socialistisk Folkeparti & -0.05 & 0.02 & -0.16 & -0.01 & -0.09 & -0.01 \\
Danmarksdemokraterne & -- & -- & -- & -- & 0.20 & 0.00 \\
Liberal Alliance & -0.11 & 0.02 & -0.04 & 0.00 & -0.12 & -0.01 \\
Enhedslisten & -0.09 & 0.06 & -0.19 & -0.02 & -0.09 & -0.04 \\
Konservative Folkeparti & -0.05 & -0.01 & -0.12 & -0.01 & -0.04 & -0.01 \\
Alternativet & -0.08 & 0.05 & -0.08 & -0.01 & -0.07 & -0.03 \\
Nye Borgerlige & -- & -- & 0.02 & 0.01 & 0.01 & 0.02 \\
\bottomrule
\end{tabular}
\caption{Denmark (2\% legal threshold).}
\label{tab:cov_seat_price_dk}
\end{subtable}%
\hfill
\begin{subtable}[t]{0.44\textwidth}
\centering

\begin{tabular}{l rr rr rr}
\toprule
 & \multicolumn{2}{c}{2015} & \multicolumn{2}{c}{2018} & \multicolumn{2}{c}{2023} \\
\cmidrule(lr){2-3} \cmidrule(lr){4-5} \cmidrule(lr){6-7}
Party & actual & SL & actual & SL & actual & SL \\
\midrule
AKP & 3.62 & -0.14 & 1.40 & -0.09 & 1.75 & -0.16 \\
CHP & -16.00 & -0.67 & -6.93 & -0.18 & -7.70 & 0.06 \\
MHP & 0.35 & 0.07 & 0.26 & 0.08 & 1.94 & 0.06 \\
HDP  &  11.69 & 0.79 & 7.11 & 0.38 & -- & -- \\
İYİ  & -- & -- & -1.82 & -0.17 & -1.67 & -0.24 \\
YSP & -- & -- & -- & -- & 7.67 & 0.38 \\
\bottomrule
\end{tabular}
\caption{Turkey (parties $\ge 7\%$ nationally).}
\label{tab:cov_seat_price_tr}
\end{subtable}
\caption{Unnormalised covariance $c\Cov(p_i, x) = \sum_j (p_i^j - \bar p_i)(k_j - k v_j / v)$ between local party support and the seat discount. By \Cref{lemma:covariance_is_difference} it equals the gap $\Delta_i - \sum_j \Delta_i^j$ between the national seat surplus and the summed constituency surpluses (in seats), for (a) Denmark and (b) Turkey. Per year, $k_j$ is either the actual one-tier apportionment (for Denmark the 135 seats) or Sainte-Lagu\"{e} on votes cast, for which the covariance is near zero by construction.}
\label{tab:cov_seat_price}
\end{table}

\subsection{Bounds in the single constituency case.}
The following single constituency deterministic bound is useful. 
\begin{theorem}[\cite{kopfermann1991mathematische}, \cite{janson2014asymptotic}] \label{thm:deterministic_bounds}
    For a $\beta$-divisor method, in an election with $n$ parties, the seat surplus $\Delta$ of a party with vote share $p$ is bounded by, 
\begin{align}\label{eq:divisor_method_seat surplus}
   \abs{\Delta - (\beta - \frac{1}{2})(np -1)} \leq \frac{1}{2} + \frac{n-2}{2}p 
\end{align}
For a $\gamma$-quota method\footnote{a sharper bound exists, but the following within-quota bound will suffice}, 
$
    \abs{\Delta - \gamma( p - \frac{1}{n})} \leq 1.
$

\end{theorem}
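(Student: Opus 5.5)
The plan is to write each party's seat count exactly in terms of its unrounded quotient plus a rounding error in $[0,1)$. The normalisation $\sum_i s_i = k$ then pins down the divisor (or the shift $\alpha$) in terms of those errors. Substituting back expresses $\Delta$ as the predicted bias term plus a weighted combination of rounding errors, and bounding that combination gives the two inequalities. Ties, where several parties sit exactly at a rounding threshold, change nothing below, since every seat count still satisfies the same two-sided inequality.

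\emph{Divisor case.} From $s_i = \lceil v_i/D - \beta\rceil$ we have $s_i = v_i/D - \beta + \theta_i$ with $\theta_i \in [0,1)$. Summing over the $n$ parties gives
\begin{align*}
  k = \frac{V}{D} - n\beta + \sum_j \theta_j, \qquad \text{so} \qquad \frac{V}{D} = k + n\beta - \sum_j \theta_j .
\end{align*}
For the party with vote share $p = v_i/V$ this yields
\begin{align*}
  s_i = pk + p\Big(n\beta - \sum_j\theta_j\Big) - \beta + \theta_i ,
\end{align*}
hence $\Delta = \beta(np-1) + \theta_i - p\sum_j \theta_j$. Next I centre the errors, $\theta_j = \frac12 + \eta_j$ with $\eta_j \in [-\frac12,\frac12)$. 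The constant parts combine to $\frac12 - \frac{np}{2}$, and with this the surplus becomes
\begin{align*}
  \Delta = \Big(\beta-\frac12\Big)(np-1) + (1-p)\eta_i - p\sum_{j\neq i}\eta_j .
\end{align*}
Each $|\eta_j| \le \frac12$, so the error is at most $\frac{1-p}{2} + \frac{(n-1)p}{2} = \frac12 + \frac{n-2}{2}p$, which is exactly the stated bound.

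\emph{Quota case.} Here $s_i = (k+\gamma)p_i - \alpha + \theta_i$ with $\theta_i \in [0,1)$. Summing and using $\sum_i p_i = 1$ gives $k = k + \gamma - n\alpha + \sum_j\theta_j$, so $\alpha = \frac{1}{n}\big(\gamma + \sum_j\theta_j\big)$. Substituting,
\begin{align*}
  \Delta = \gamma\Big(p - \frac1n\Big) + \theta_i - \frac1n\sum_j\theta_j .
\end{align*}
Both $\theta_i$ and the average $\frac1n\sum_j\theta_j$ lie in $[0,1)$, so their difference has absolute value less than $1$.

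The only step needing care is the centring in the divisor case. Bounding $\theta_i - p\sum_j\theta_j$ crudely gives a constant that is too weak, and only after splitting off the $\frac12$'s does the bias $(\beta-\frac12)(np-1)$ appear. The bound then needs the own-party error to carry the factor $1-p$ rather than appearing separately, which gives the $\frac12 + \frac{n-2}{2}p$ form. The rest is routine bookkeeping.
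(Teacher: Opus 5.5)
Your proof is correct. The paper gives no proof of this theorem; it only quotes Kopfermann and Janson. Your argument therefore replaces a citation with a self-contained derivation, namely the exact residual identities
\[
\Delta=\beta(np-1)+\theta_i-p\sum_j\theta_j,\qquad \Delta=\gamma\Bigl(p-\frac1n\Bigr)+\theta_i-\frac1n\sum_j\theta_j ,
\]
followed by bounding the residuals.

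Compared with the bare citation, your route buys a few things.

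First, it shows the divisor bound is the best that follows from the identity with $\theta\in[0,1]^n$. The configuration $\theta_i=1$, $\theta_j=0$ for $j\neq i$, and its reverse, give exactly the endpoints $\beta(np-1)+1-p$ and $\beta(np-1)-(n-1)p$.

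Second, in the quota case the same identity gives the sharper bound $|\Delta-\gamma(p-\frac1n)|\le 1-\frac1n$, because $\theta_i$ enters the average with weight $\frac1n$.

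Third, it exposes a hypothesis the statement leaves implicit. The representation $s_j=v_j/D-\beta+\theta_j$ with $\theta_j\in[0,1]$ must hold for every party, including seatless ones.
\begin{itemize}
\item This is automatic for the paper's literal ceiling formula.
\item For a genuine highest-averages method with divisors $\beta,1+\beta,2+\beta,\dots$, seat counts are truncated at $0$. The representation then needs $v_j/D\ge\beta-1$, which always holds for $0\le\beta\le1$ but can fail for $\beta>1$.
\item Example: take $\beta=3$, $n=2$, $k=1$, $p=0.9$. The larger party wins the seat, so $\Delta=0.1$, while the stated bound demands $|\Delta-2|\le\frac12$.
\item In the quota case you similarly need no ceiling $\lceil (k+\gamma)p_j-\alpha\rceil$ to be negative. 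This holds generically for $0\le\gamma\le1$, since $\alpha$ can then be taken in $[0,1)$.
\end{itemize}

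So you should state the range $0\le\beta\le1$, $0\le\gamma\le1$, which covers D'Hondt, Sainte-Lagu\"e, Hare and Droop. With that restriction, your argument is airtight for all the methods the paper actually uses.
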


When $\beta > \frac{1}{2}$ or $\gamma > 0$, the apportionment methods have a bias towards large parties. The aim of the next section is to sum up this formula across constituencies.

\subsection{Deterministic bounds on the seat surplus in the one-tier case.}
The following bound generalizes the corresponding single constituency bound  \cite{kopfermann1991mathematische} to one-tier systems and at the same time it quantifies the discussion in \Cref{sec:seat surplus_accumulation}. 
\begin{proposition} \label{proposition:total_bias_bound}
For $\beta$-divisor methods, for a party with average vote share $\bar p_i$, the total seat surplus $\Delta_i$ is bounded by 
    \begin{align*}
    \abs{\Delta_i - c(n-\frac{1}{\bar p_i})(\beta - \frac{1}{2}) \bar p_i -c\Cov(p_i,x)} \leq \frac{c}{2} + c \bar p_i \frac{n-2}{2}. 
\end{align*}
\end{proposition}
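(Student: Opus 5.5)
The proof combines \Cref{lemma:covariance_is_difference} with the single-constituency bound of \Cref{thm:deterministic_bounds}, applied constituency by constituency and then summed.

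First, by \Cref{lemma:covariance_is_difference}, the total seat surplus splits as
$$\Delta_i = \sum_{j=1}^c \Delta_i^j + c\Cov(p_i,x),$$
where $\Delta_i^j = s_i^j - p_i^j k_j$ is the local seat surplus in constituency $j$. This identity is purely algebraic and holds for any regional method. Subtracting $c\Cov(p_i,x)$ inside the absolute value therefore leaves only the quantity
$$\sum_j \Delta_i^j - c(n-\tfrac{1}{\bar p_i})(\beta-\tfrac12)\bar p_i$$
to control.

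Next, apply \Cref{thm:deterministic_bounds} in each constituency $j$ to the party with local vote share $p_i^j$. This assumes all $n$ relevant parties compete in every constituency; I would state that assumption explicitly, since the bound depends on the local number of parties. It gives
$$\abs{\Delta_i^j - (\beta-\tfrac12)(np_i^j-1)} \le \tfrac12 + \tfrac{n-2}{2}p_i^j.$$
Summing over $j$ and using the triangle inequality yields
$$\Bigl|\sum_j \Delta_i^j - (\beta-\tfrac12)\sum_j (np_i^j - 1)\Bigr| \le \tfrac c2 + \tfrac{n-2}{2}\sum_j p_i^j.$$
Now use $\sum_j p_i^j = c\bar p_i$. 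The centre term then becomes
$$(\beta-\tfrac12)(nc\bar p_i - c) = c(n-\tfrac{1}{\bar p_i})(\beta-\tfrac12)\bar p_i,$$
and the right-hand side becomes $\tfrac c2 + c\bar p_i\tfrac{n-2}{2}$. These are exactly the centre and error terms in the claim. Combining with the identity from the first step finishes the proof.

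The only delicate point is this bookkeeping: the claimed centre term is linear in the local shares $p_i^j$, so the sum collapses onto the unweighted average $\bar p_i$, not the national share $p_i$. The discrepancy between the two, which comes from uneven seat weights $k_j$ across constituencies, is exactly what the covariance term of \Cref{lemma:covariance_is_difference} accounts for. I would check carefully that the lemma's definition of $\Delta_i^j$ matches the single-constituency $\Delta$ of \Cref{thm:deterministic_bounds}, which is surplus relative to $p_i^j k_j$. It does, by the first line of that lemma's proof.

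A secondary concern is a party that receives no votes in some constituency, or constituencies with fewer competing parties. Using $n$ throughout is harmless there: the bound of \Cref{thm:deterministic_bounds} for $p=0$ still holds, giving surplus $0$ against the interval $(\beta-\tfrac12)(-1)\pm\tfrac12$. I would verify this edge case rather than rely on it silently.
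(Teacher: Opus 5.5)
Your proof is correct and is the paper's argument: the paper's proof is the one line ``combine \Cref{lemma:covariance_is_difference} with \Cref{thm:deterministic_bounds} and sum the biases'', and you fill in the summation $\sum_j p_i^j = c\bar p_i$ and the zero-vote edge case correctly. One aside in your commentary is off, though it does not affect the proof: the covariance term does not account for the gap between $\bar p_i$ and $p_i$ (it vanishes when $k_j = k v_j/v$ for all $j$, while $\bar p_i \neq p_i$ can persist when the vote totals $v_j$ differ); it accounts for the gap between $\sum_j p_i^j k_j$ and $p_i k$, and the statement is simply phrased in terms of $\bar p_i$.
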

\begin{proof}
Combine \Cref{lemma:covariance_is_difference} with \Cref{thm:deterministic_bounds} and sum the biases. 
\end{proof}
If the election outcome is random, one could believe that the expected value is the middle of the interval it is confined to\footnote{This can be proven in the asymptotic limit in the single constituency case (see \Cref{thm:deterministic_bounds} ). Here it depends on whether the number of relevant parties $n$ is chosen sensibly. 
}. 
 That would lead us to the following heuristic:
\begin{align}
    \boxed{\mathbb{E}[\Delta_i] = \underset{\text{method bias}}{ \underbrace{c(n-\frac{1}{\bar p_i})(\beta - \frac{1}{2}) \bar p_i}} +\underset{\text{malapportionment transfer}}{ \underbrace{c\Cov(p_i,x)}}.} 
\end{align}
If the support of the party is evenly distributed (or the constituencies have similar size) then $\bar p_i \approx p_i$. 
If the party support is furthermore not correlated with the malapportionment of the votes to constituency seats then $\Cov(p_i,x)$ is 0 in expectation and we recover \Cref{eq:expected_bias}. 
Defining the fluctuations $F_i$ through the following relation we get a complete operationalisation of the four mechanisms discussed in \Cref{sec:seat surplus_accumulation}, 
\begin{align}\label{eq:seat_surplus_split_into_mechanisms}
    \Delta_i = \underset{\text{method bias}}{ \underbrace{c(n-\frac{1}{\bar p_i})(\beta - \frac{1}{2}) \bar p_i}} +\underset{\text{malapportionment transfer}}{ \underbrace{c\Cov(p_i,x)}} + \underset{\text{ fluctuations}}{ \underbrace{F_i.}}
\end{align}
In the single-constituency case, the fluctuations of party $i$, denoted here by $F_i$, were characterized in the limit $k_j \to \infty$  in \cite[Thm 3.7, Thm 3.13]{janson2014asymptotic}.
Combined with \Cref{prop:rigorous_bound_on_L} this allows for a model that would give both expectation and variance of $L^*$ for the entire parameter regime of $\beta$-divisor methods and $\gamma$-quota methods.

\subsection{Deterministic bound on congruence for the largest party}
In this subsection, which can be skipped on first reading, a deterministic bound for the congruence of the largest party is given. 
\begin{theorem}[Deterministic bound, simplest case]\label{thm:deterministic_two_tier_simple}
    Suppose that a two-tier system electoral system has a one-tier base with $c$ constituencies. Make the following assumptions on the electoral system:
    \begin{itemize}
        \item  The regional seats are distributed among constituencies based on the vote totals using Hare Quota. 
        \item  The regional seats are apportioned on parties using a $\beta$-divisor method (with $\beta > \frac{1}{2}$) in the first tier.
        \item  The national tier uses Hare Quota.
    \end{itemize}
    If there are $n$ parties, the largest party has national vote share $p_1$ and average constituency vote share $\bar p_1$. The largest party is congruent if the number of compensatory seats per constituency $L/c$, satisfies
\begin{align}\label{eq:largest_party_congruence}
\boxed{\frac{L}{c}\geq  \beta \frac{\bar p_1}{p_1}( n - \frac{1}{\bar p_1}) + \frac{c+1}{c p_1}.}
\end{align}
\end{theorem}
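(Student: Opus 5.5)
The plan is to reduce congruence of the largest party to one inequality between a lower bound on its national seats and an upper bound on its regional seats. Then I bound the regional seat surplus using \Cref{proposition:total_bias_bound} together with the malapportionment estimate from \Cref{lemma:covariance_is_difference}.

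\emph{National side.} Exactly as in the proof of \Cref{prop:rigorous_bound_on_L}, Hare Quota is within quota, so
$$\mathtt{HQLR}^{k+L}(\mathbf{v})_1 \geq p_1(k+L)-1.$$
By \eqref{eq:seat surplus_definition} the party holds $r_1 = p_1 k + \Delta_1$ regional seats. Hence it is congruent as soon as
$$p_1 L \geq \Delta_1 + 1.$$
It therefore suffices to find an explicit upper bound on $\Delta_1$.

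\emph{Regional side.} \Cref{proposition:total_bias_bound} gives
$$\Delta_1 \leq c\Bigl(n-\tfrac{1}{\bar p_1}\Bigr)\Bigl(\beta-\tfrac12\Bigr)\bar p_1 + c\Cov(p_1,x) + \tfrac{c}{2} + c\bar p_1\tfrac{n-2}{2}.$$
The malapportionment term uses the first assumption. Because the constituency seats are distributed by Hare Quota on the vote totals, $\abs{x_j}\le 1$ for all $j$. Since $\sum_j x_j = 0$,
$$c\Cov(p_1,x) = \sum_j p_1^j x_j \le \sum_j p_1^j = c\bar p_1.$$
Only an upper bound on the covariance is needed, so I would use this one-sided form rather than bounding $\abs{p_1^j-\bar p_1}$. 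That is the point to be careful about, since $\abs{p_1^j-\bar p_1}$ need not be at most $\bar p_1$.

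\emph{Algebra.} Substituting the covariance bound, the coefficient of $c\bar p_1$ becomes
$$\Bigl(\beta-\tfrac12\Bigr)\Bigl(n-\tfrac1{\bar p_1}\Bigr) + \tfrac{n-2}{2} + 1.$$
Collecting terms, this simplifies to $\beta\bigl(n-\tfrac1{\bar p_1}\bigr) + \tfrac{1}{2\bar p_1}$. The $\tfrac{1}{2\bar p_1}$ piece, multiplied by $c\bar p_1$, contributes $c/2$, which combines with the existing $c/2$. This gives
$$\Delta_1 \le c\beta \bar p_1\Bigl(n-\tfrac{1}{\bar p_1}\Bigr) + c.$$
The congruence condition $p_1 L \ge \Delta_1+1$ is therefore implied by
$$L \ge \frac{c\beta\bar p_1\bigl(n-\tfrac1{\bar p_1}\bigr)+c+1}{p_1}.$$
Dividing by $c$ yields exactly \eqref{eq:largest_party_congruence}.

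The assumption $\beta>\tfrac12$ is not used in the inequalities themselves; it only explains why the largest party is the natural anchor, and the argument holds verbatim. The main obstacle, such as it is, is the covariance step, where I commit to the one-sided bound above. Everything else is direct substitution of earlier results.
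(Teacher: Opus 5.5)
Your proof is correct and follows the paper's argument. The within-quota property of national Hare Quota reduces congruence to $p_1L \geq \Delta_1+1$, and \Cref{proposition:total_bias_bound}, with the malapportionment term bounded by $c\bar p_1$, then gives $\Delta_1 \leq c\beta\bar p_1(n-\frac{1}{\bar p_1})+c$. Your covariance step is slightly more explicit than the paper's: using $\sum_j x_j=0$ you rewrite $c\Cov(p_1,x)=\sum_j p_1^j x_j\leq c\bar p_1$, whereas the paper simply asserts the bound from $\abs{x_j}\leq 1$.
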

\begin{proof}
Assuming Hare Quota nationally and that those seats are approximately proportional so that if there are $k+L$ seats in total the party with percentage $p_1$ should obtain around $(k+L)p_1$ of them. 
Thus, if the following inequality is satisfied the largest party cannot be incongruent , 
$$
(k+L)p_1 -1 \geq s_1 = \Delta_1 + kp_1. 
$$
Notice how the terms $kp_1$ cancel and we get the following criterion for congruence:
\begin{align}
    Lp_1 \geq \Delta_1+1. 
\end{align}
Now, inserting our rigorous guarantee for the size of the seat surplus $\Delta_1$ from \Cref{proposition:total_bias_bound} and using that the malapportionment transfer (covariance term) is bounded by $c \bar p_1$ because of the first assumption. That yields the following criterion for congruence:
$$
L p_1 \geq c( n-\frac{1}{\bar p_1})(\beta - \frac{1}{2}) \bar p_1 + \frac{c}{2} + \frac{c \bar p_1 n}{2} +1 
$$
rearranging yields \eqref{eq:largest_party_congruence}. 
\end{proof}

We leave the generalization of \Cref{thm:deterministic_two_tier_simple} beyond Hare Quota to future work.

\begin{remark}
    Two special cases are worth highlighting. For the D'Hondt method $\beta =1$ the criterion is
    \begin{align}
\frac{L}{c}\geq \frac{\bar p_1}{p_1}( n - \frac{1}{\bar p_1}) + \frac{c+1}{c p_1}.
\end{align}
While for the SL-method it is $\beta = \frac{1}{2}$ and so $
\frac{L}{c}\geq \frac{1}{2}\frac{\bar p_1}{p_1}( n - \frac{1}{\bar p_1}) + \frac{c+1}{cp_1}.
$

Another special case of interest is when the party support is homogeneous. In that case, $\bar p_1 = p_1$ and thus, 
\begin{align}\label{eq:bound_hom_support}
    \frac{L}{c}\geq  \beta ( n - \frac{1}{p_1}) + \frac{c+1}{c p_1}.
\end{align}
\end{remark}

\section{Comparison to real-world electoral data}\label{sec:empirical_study}
The goal of this section is to test the logically derived formulas \eqref{eq:expected_number_of_levelling_seats} and \eqref{eq:expected_number_of_levelling_seats_quota} for the expected number of needed compensatory seats against electoral data.

The empirical part of this paper is based on the CLEA dataset on national elections with constituency-level granularity \cite{kollman2024clea}. 
 \Cref{tab:most_recent_election_summary} gives an overview of included countries and the diversity of the data for just the most recent election included in the dataset. 
 The countries selected for the analysis all have elections in multi-member constituencies (and they are either one or two-tier systems). They have been selected mainly with an eye to where there has been the most public discussion about including compensatory seats. 
 With some amount of data cleaning, which is discussed in \Cref{sec:further_data_cleaning} the analysis here is easily extended to any parliament that employs multi-member constituencies.

In the analysis there is a general trade-off of uniformity versus  context. One example of national context is the presence of an electoral threshold.     For the bulk analysis in this section a national electoral threshold of $5\%$ is imposed in each election. But in the specific applications in the next section the legal threshold is used with the caveat that if the country has no threshold we will nevertheless impose a threshold of $2\%$ to limit an explosion in the number of parties. This sometimes changes the results a bit, but some regularization is needed since we use the actual number of parties, $n$, and not some effective party number. For instance, in Spanish elections many parties ran for parliament without obtaining a seat and counting those parties would obscure the analysis.

\begin{table}[htbp]
\centering
\caption{Most recent election summary at the actual number of constituency seats with an imposed 5\% threshold and the constituency seats distributed based on the actual votes (using the SL-method). For the overrepresented parties its percentage $p$ and average constituency percentage $\bar p$ is shown in parenthesis after the party name. $L^*$ is computed when the regional seats are apportioned using both $\mathtt{DH}$ and $\mathtt{SL}$, the compensation requirement $L^*$ is computed with respect to national Hare quota. The vote shares are renormalised for parties above threshold. }
\label{tab:most_recent_election_summary}
\resizebox{\textwidth}{!}{%
\begin{tabular}{lrlllrrrrlrlll}
\hline
Country & Year & Largest Party & $p$ (\%) & $\bar{p}$ (\%) & $c$ & $n$ & $k$ & $L^*$ for $\mathtt{DH}$ & Overrepresented (DH) & $L^*$ for $\mathtt{SL}$ & Overrepresented (SL) & $L^*/c$ for $\mathtt{DH}$  & $0.5(n - 1/p)$ \\
\hline
Denmark & 2022 & Socialdemokratiet & 32.4 & 33.2 & 10 & 8 & 135 & 25 & Socialdemokratiet (32.4\%, 33.2\%) & 7 & Danmarksdemokraterne (9.5\%, 9.2\%) & 2.50 & 2.45 \\
\hline
Latvia & 2022 & Jaunā VIENOTĪBA & 25.0 & 22.4  & 5 & 8 & 100 & 8 & Zaļo un Zemnieku savienība (16.4\%, 19.0\%) & 4 & LATVIJA PIRMAJĀ VIETĀ (8.2\%, 7.9\%) & 1.60 & 2.00 \\
\hline
Norway & 2021 & Arbeiderpartiet & 33.1 & 34.0  & 19 & 5 & 150 & 21 & Arbeiderpartiet (33.1\%, 34.0\%) & 8 & Høyre (25.6\%, 22.3\%) & 1.11 & 0.99 \\
\hline
Peru & 2020 & accion popular & 14.9 & 16.4 & 26 & 9 & 130 & 29 & Alianza para el Progreso (11.6\%, 15.4\%) & 40 & Alianza para el Progreso (11.6\%, 15.4\%) & 1.12 & 1.15 \\
\hline
Poland & 2023 & PiS & 36.8 & 38.1 & 41 & 5 & 460 & 59 & PiS (36.8\%, 38.1\%) & 43 & Konfederacja (7.4\%, 7.4\%) & 1.44 & 1.14 \\
\hline
Portugal & 2024 & Aliança Democrática & 36.1 & 36.4 & 22 & 4 & 230 & 7 & CHEGA (22.6\%, 24.3\%) & 11 & CHEGA (22.6\%, 24.3\%) & 0.32 & 0.62 \\
\hline
South Africa & 2019 & ANC & 65.3 & 68.2  & 9 & 3 & 200 & 5 & ANC (65.3\%, 68.2\%) & 1 & Economic Freedom Fighters (11.2\%, 10.0\%) & 0.56 & 0.73 \\
\hline
Spain & 2023 & PP & 36.9 & 38.6 & 52 & 4 & 350 & 34 & PP (36.9\%, 38.6\%) & 28 & PP (36.9\%, 38.6\%) & 0.65 & 0.65 \\
\hline
Sweden & 2022 & Socialdemokraterna & 32.3 & 33.3  & 29 & 7 & 310 & 63 & Socialdemokraterna (32.3\%, 33.3\%) & 44 & Centerpartiet (7.2\%, 7.1\%) & 2.17 & 1.95 \\
\hline
Switzerland & 2023 & SVP & 30.1 & 30.7 & 26 & 6 & 200 & 21 & SVP (30.1\%, 30.7\%) & 6 & SP (20.1\%, 16.6\%) & 0.81 & 1.34 \\
\hline
Turkey & 2023 & AKP & 39.8 & 40.4 & 87 & 5 & 600 & 86 & AKP (39.8\%, 40.4\%) & 20 & AKP (39.8\%, 40.4\%) & 0.99 & 1.24 \\
\hline
Uruguay & 2019 & Partido Frente Amplio & 42.9 & 34.6 & 19 & 4 & 99 & 21 & Partido Nacional (31.4\%, 35.6\%) & 6 & Partido Frente Amplio (42.9\%, 34.6\%) & 1.11 & 0.83 \\
\hline
\end{tabular}
}
\end{table}

\begin{figure}
    \centering
\includegraphics[width=\linewidth]{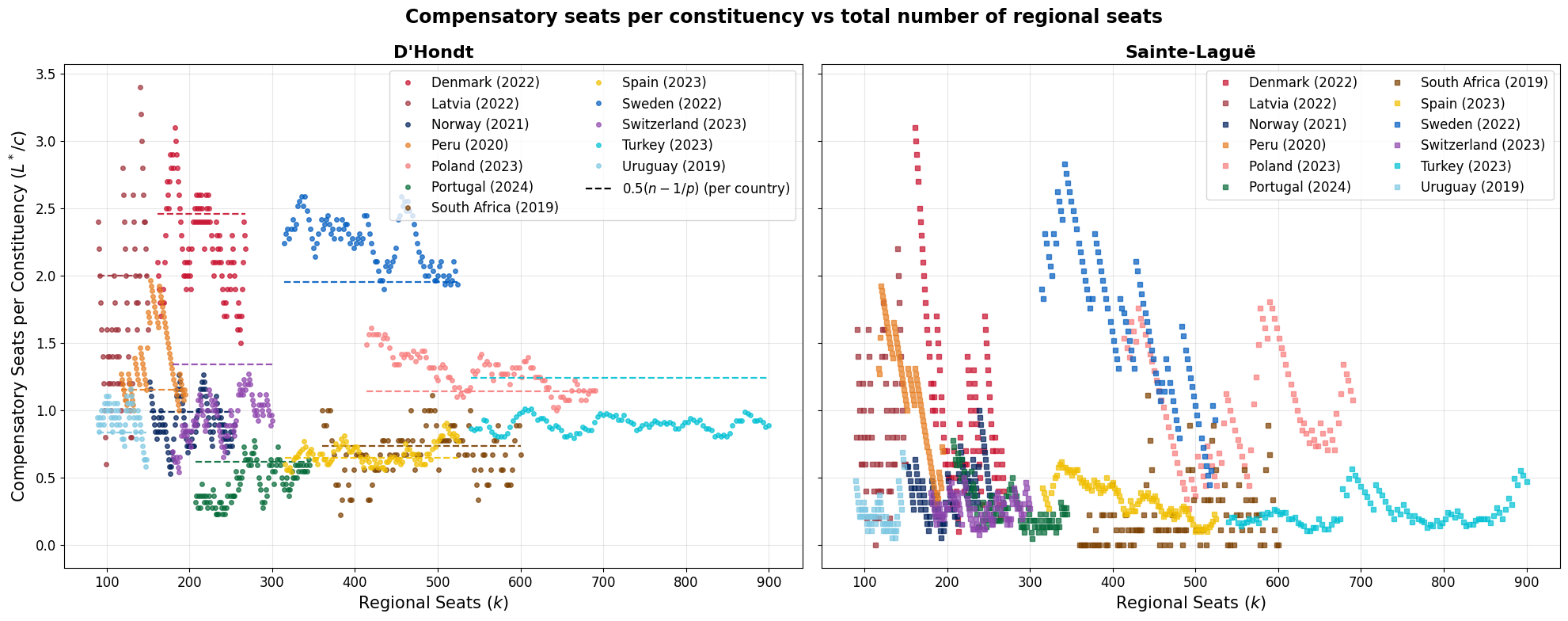}
    \caption{The minimal number of compensatory seats needed for the last election in the CLEA dataset. On the left the D'Hondt method is used to apportion seats to parties in the constituencies, and on the right the Sainte-Laguë method. Naturally, the fluctuations and the degree of discretization are larger for countries with very few constituencies.   }
    \label{fig:varying_parliament_size}
\end{figure}

\begin{figure}
    \centering
    \includegraphics[width=\linewidth]{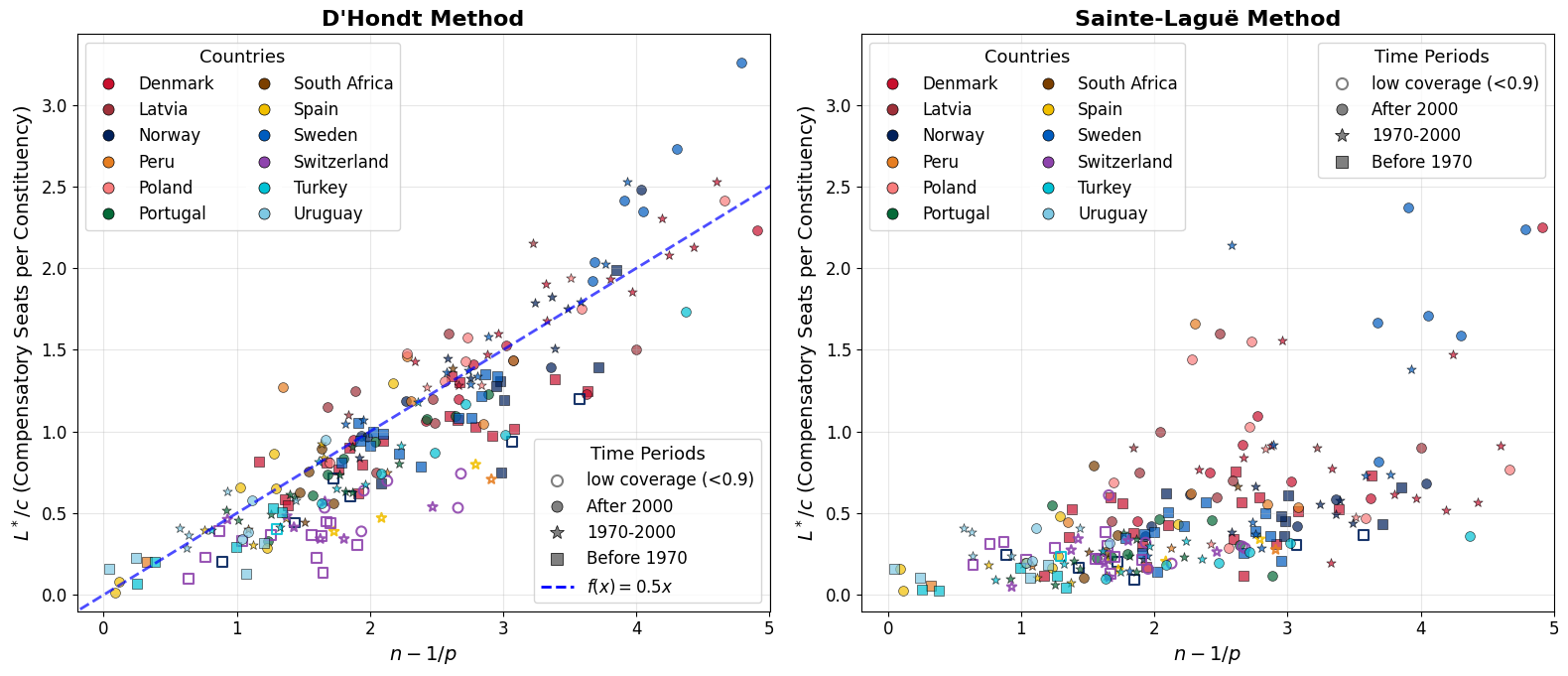}
    \caption{The number of compensatory seats $L^*$ needed per constituency $c$ plotted against the party surplus number $n - \frac{1}{p}$ (the number of parties $n$ above a uniform national threshold at $5\%$, and $p$ is the vote share of the largest party). For every real election based on a one-tier system with total parliament size $k$ the regional seats are distributed using the SL method to constituencies. The parliament size is varied between 0.9 and 1.5 of the actual size and an average is taken. 
    The line corresponds to $f(x) = \frac{1}{2}x$; it is not fitted but logically derived in \eqref{eq:expected_number_of_levelling_seats}. As asymptotically the SL method is unbiased, the same formula is not expected to work for the SL method. \label{figure:numerics_DH_and_SL}}
\end{figure}


\subsection{$L^*$ as a function of $k$}
 Recall that our expectation of the number of needed compensatory seats is independent of the total number of constituency seats $k$.   
We can test that prediction in \Cref{fig:varying_parliament_size}.  Here, for different values of the total number of constituency seats $k$, the regional seats are apportioned based on the number of votes in the election that year using the SL-method, see also \Cref{alg:seats_based_on_vote_totals}. 
Using the resulting regional seat distribution, $L^*$ is computed using \Cref{alg:levelling}. 

In practice, the $k$ regional seats are only apportioned to constituencies for the actual $k$ in use. Thus, to vary $k$ it is necessary to reapportion the regional seats. Throughout this is done using the SL-method. This also mitigates the problem of finding precise data  on the number of seats per constituency.  This simplification eliminates any systematic malapportionment transfer, cf. \eqref{eq:seat_surplus_split_into_mechanisms}.

\Cref{fig:varying_parliament_size} shows that while $L^*$ does change when $k$ varies, as expected the dependence is generally not systematic.
One thing to note on the left of \Cref{fig:varying_parliament_size} is that the total number of constituencies varies wildly across countries (cf. \Cref{tab:most_recent_election_summary}). Generally, the variation for countries with few constituencies is larger than for countries with many constituencies. 

On the right of \Cref{fig:varying_parliament_size} the same is done for the Sainte-Lagu\"e method. 
In that case the variations are generally larger, especially for the countries with few constituencies.


\begin{algorithm}[t]
\caption{Seats distributed to constituencies based on vote totals}\label{alg:seats_based_on_vote_totals}
\begin{algorithmic}[1]
\Input Vote totals in each constituency, $V= (v_1, \dots, v_c)$, number of one-tier seats $k$,
       constituency allocation method $\mathtt{M}$. Throughout $\mathtt{M} = \mathtt{SL}.$
\Output Constituency magnitudes $(k_1,\dots,k_c) = \mathtt{M}^k(v_1, \dots, v_c)$
\end{algorithmic}
\end{algorithm}

\subsection{Divisor methods}
The observation that $L^*$ does not depend on $k$ can be used to reduce the variance in the number of compensatory seats needed for each election by varying $k$.  For each election (country, year) in the data set and each assembly size $k$ between 0.9 and 1.5 of the current assembly size, the constituency seats are apportioned (based on the votes using the SL-method) to obtain ($k_1$, \dots,  $k_c$).
For each such apportionment the minimal number of compensatory seats $L^*(k)$ is calculated. Taking the average over $k$ yields $L^*$ for that election  (corresponding to one data-point in the figure). The results are shown for divisor methods in \Cref{figure:numerics_DH_and_SL} and for quota methods in \Cref{fig:quota_methods}. 
 
For the D'Hondt method (shown on the left in \Cref{figure:numerics_DH_and_SL}) the overall trend is that the number of needed compensatory seats per constituency $L^*/c$ increases linearly with the party fragmentation quantity $n-\frac{1}{p}$ as predicted by \eqref{eq:expected_number_of_levelling_seats}. Of the four regional methods discussed in \Cref{figure:numerics_DH_and_SL} and  \Cref{fig:quota_methods} the data collapse for the D'Hondt method is most remarkable. 

That this is the case is in line with our theoretical expectation since for the four apportionment methods studied the D'Hondt method has the strongest bias (towards the larger parties). This means that in many cases (cf. \Cref{tab:most_recent_election_summary}) the largest party is anchoring. For the SL-method, the anchoring party is more often determined by statistical fluctuations.

The outliers tend to be below the line predicted by logical analysis rather than above the line. 
In some cases, this can be explained by some parties not running in all constituencies, we say the election has  \emph{low coverage}, see \Cref{subsec:low_coverage} for further discussion.  
On the other hand, if there is not a single largest party, the fluctuations of all the largest parties have to be accounted for, which means that additional compensatory seats are needed.
This we suspect might be the case in some of the elections in Sweden that are significantly above the line in  \Cref{figure:numerics_DH_and_SL}.

On the right of \Cref{figure:numerics_DH_and_SL} the same analysis is done, but where the Sainte-Lagu\"e method is used for apportioning constituency seats. Comparing to the left hand side shows that generally fewer compensatory seats are needed (note that the axes are the same).
There is however still a trend that shows that the number of compensatory seats per constituency tends to increase with the party fragmentation $n-\frac{1}{p}$.

\subsection{Quota methods}
For quota methods, the results are plotted in \Cref{fig:quota_methods}. On the left, we again see the expected trend for the Droop method, which favours larger parties, but the trend is less clear. The reason for that is that the bias of the quota methods is smaller. On the right, the same analysis is carried out for Hare Quota, where the variations are even larger and no clear pattern emerges.

\begin{figure}
    \centering
    \includegraphics[width=\linewidth]{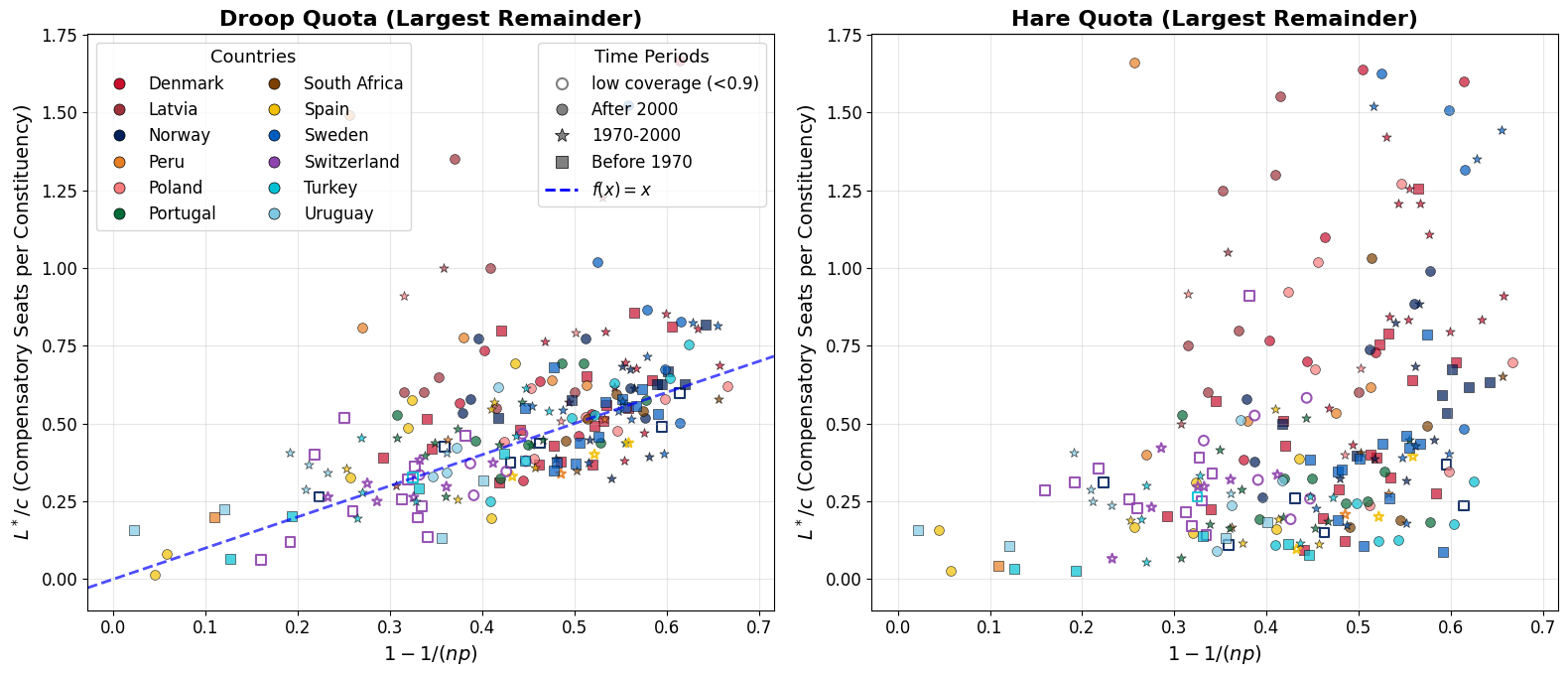}
    \caption{The number of compensatory seats per constituency for the two most common quota methods. The Droop Quota, which is a cousin of the D'Hondt method, and Hare Quota - a cousin of the SL-method. 
    For Droop Quota the expectation of $L^*/c$ is given by \eqref{eq:expected_number_of_levelling_seats_quota} and shown with the line. 
    Notice how the $x$-axis differs compared to \Cref{figure:numerics_DH_and_SL}. }
    \label{fig:quota_methods}
\end{figure}

\subsection{Low coverage}\label{subsec:low_coverage}
Even after imposing a uniform $5\%$ electoral threshold and getting rid of malapportionment by using SL based on vote totals to distribute constituency seats there can be aspects of the local electoral context which are not captured by the theory. 
For example, the framework does not accommodate that some parties are de facto not running in all constituencies.
To define that, we say that an election has \emph{low coverage} if parties run in less than $90\%$ of the constituencies on average. Low coverage means that the number of parties (even above threshold) is artificially inflated and thus we would expect the points in \Cref{figure:numerics_DH_and_SL}left) to be below the line.
This effect is most prevalent in Switzerland, where many parties in the 1950s and 1960s did not run in all constituencies. 


\subsection{Further aspects of data-cleaning}\label{sec:further_data_cleaning}
The CLEA dataset was cleaned in various ways. For instance, many different versions of "Others" are deleted (even though in a few cases others correspond to elected parties, e.g., the Danish party "Retsforbundet" in the elections of 1977 and 1979). 

Another aspect of data cleaning is that local versions of the same party have many different names in the CLEA dataset. 
Parts of this only concern labels, e.g., "svp" or "SVP". 
But it can always be discussed whether two branches of a party are actually the same or not. For example, it seems sensible that the German CDU and CSU should be treated as different parties, whereas it makes most sense in this context to treat versions of the Spanish socialists in different regions as the same party.

In addition, there is a data issue with Swiss cantons where voters have as many votes as the number of representatives from that canton. Since the constituency seats were distributed based on the number of votes this would skew the seat distribution in \Cref{alg:seats_based_on_vote_totals} dramatically. Therefore, the number of votes is divided by the number of votes per voter.

\section{Estimates on the compensation requirement}
Next we discuss the trade-offs that have to be made when choosing a reasonable number of compensatory seats $L$. 


\subsection{Policy advice: how many compensatory seats are sensible?}\label{sec:policy_advice}
Of course no-one knows how a specific party system develops in the future. Especially as the addition of compensatory seats might increase the representation of smaller parties. This again increases the potential momentum of smaller parties and can lead to a more fractured party system, which in turn requires additional compensatory seats.
Therefore, it is hard to say which $n$ and $p$ should be used as a starting point in the formulas.

Furthermore, the Swedish and German examples show that an electoral system can guarantee proportionality (for parties above the electoral threshold) using only a modest number of compensatory seats. The disadvantage with this system is however that parties do not necessarily keep locally obtained seats. 

Thus, in the design of two-tier electoral systems and in the choice of the number of compensatory seats $L$ more specifically, one should not aim for a number so large that a mathematical guarantee can be given.
Instead, one could adhere to the paradigm of guaranteed proportionality and then ask the question of how many compensatory seats are needed to ensure that all locally elected representatives get elected to parliament\footnote{The 3-tier design of the German system means that this number was much higher in the 2025 German election than what was needed mathematically.}.

Since many of the formulas derived depend on the combination $n - \frac{1}{p}$, we need to use a specific value for that number. Two natural estimates are obtained by taking either the maximal over recent elections of 
$ n - \frac{1}{p},$ or to take the even larger pointwise maximum $\max(n) - \frac{1}{\max(p)}$, where the maximum is also over recent elections. Using the former reflects the party structure the best. For the formulas that have terms with $\frac{1}{p}$ added, we use the pointwise min for $p$. 

\definecolor{propblue}{HTML}{0072B2}   
\definecolor{costverm}{HTML}{D55E00}   
\definecolor{inkgray}{HTML}{4A4A4A}
\definecolor{gridgray}{HTML}{9A9A9A}

\begin{figure}
\scalebox{0.7}{
\begin{tikzpicture}[>=Stealth, line cap=round, font=\small]
 
\def\Ls{3.4}                       
\def\off{1.68}                     
\pgfmathsetmacro{\yA}{6.8-\off}    
\pgfmathsetmacro{\yB}{8.4-\off}    
\pgfmathsetmacro{\xend}{8.1}
\pgfmathsetmacro{\yend}{\yB-0.5*(\xend-\Ls)}   
\def\xmax{8.6}
\def\ytop{7.6}
\def\ybot{-2.2}
 
\draw[->, thick, inkgray] (0,0) -- (\xmax,0);
\node[below, text=inkgray] at (\xmax/2,-2.6) {number of compensatory seats $L$};
\draw[->, thick, inkgray] (0,\ybot) -- (0,\ytop);
\node[rotate=90, text=inkgray] at (-0.55,{(\ytop+\ybot)/2}) {value of compensatory seats};
 
\draw[dashed, gridgray] (\Ls,0) -- (\Ls,\yA);
\node[below, text=inkgray] at (\Ls,0) {$L^{*}$};
 
\draw[propblue, line width=1.6pt] (0,-\off) -- (\Ls,\yA);
\draw[inkgray,  line width=1.6pt] (\Ls,\yA) -- (\Ls,\yB);
\draw[costverm, line width=1.6pt] (\Ls,\yB) -- (\xend,\yend);
 
\node[inkgray, align=left, anchor=west] at (0.22,-1.55)
      {\scriptsize added complexity};
\node[propblue!85!black, align=left, anchor=west] at (0.45,3.9)
      {\scriptsize gains in\\[-2pt]\scriptsize proportionality};
\node[inkgray, align=right, anchor=east] at (\Ls-0.2,{\yB-0.8})
      {\scriptsize proportionality\\[-2pt]\scriptsize ensured};
\node[costverm!90!black, align=left, anchor=west] at (5.4,3.6)
      {\scriptsize cost of less geographic\\[-2pt]\scriptsize representation};
 
\end{tikzpicture}}
    \caption{Schematic illustration of the value of compensatory seats for a specific election assuming that proportionality, geographic representation and a bound on the parliament size all matter. Exactly at $L^*$ there is a jump which is both because the algorithm necessarily has to be more complicated below $L^*$ and because either guaranteed proportionality or geographical representation can no longer be satisfied. The reason the curve starts below the $x$-axis is the added complexity of the two-tier electoral system. }
    \label{fig:schematic_value_of_levelling_seats}
\end{figure}

\subsection{The choice of method in the constituencies}
Generally, more compensatory seats are needed with the D'Hondt method than with methods that have less bias. This has been demonstrated using many different methods in almost every section of this paper.

In addition, constituency seats sometimes act as back doors for electoral threshold. To make it harder for a smaller party to obtain the first seat in a constituency, the first SL-divisor has been modified in various ways. The main argument against quota methods is the well-known lack of population monotonicity.

\subsection{Disadvantages of compensatory seats and alternatives to two-tier systems}
There are several complexity costs associated with the addition of compensatory seats.
Two-tier electoral systems are necessarily more complicated than one-tier systems, in particular because of the necessary incongruence-breaking mechanism that has to exist. 

On the other hand, in one-tier systems the results can be evaluated locally; without any involvement of say a central government. The absence of national lists means that the entire election including ballots can be held regionally.  This structure arguably means that the power distribution within parties is more regionalized.

If an electoral system imposes compensatory seats, then there are several reasons not to have an excessive number.
One could argue that increasing the number of compensatory seats weakens the geographical representation of the system. The compensatory seats are however often assigned to a constituency. 
It has also been mentioned that politicians sometimes view compensatory seats differently from constituency seats. 
Electoral systems such as the former German system, which adds compensatory seats to parliament (ge.\ \emph{Ausgleichsmandate}, used to compensate for additional incongruent seats overhang seats)  until congruence is ensured, do not have any bounds on the number of seats in parliament, which comes with clear problems of cost and public support.  

To sum up, there might be some arguments against having far too many compensatory seats. A schematic summary of the arguments is shown in \Cref{fig:schematic_value_of_levelling_seats}. 
The figure illustrates that the marginal value of a compensatory seat is the same until ensured proportionality is reached (at $L^*$ compensatory seats).

For each election outcome each added compensatory seat counts towards reducing disproportionality until $L^*$ is reached whereafter the disproportionality is not further reduced. 
Policy makers constrained by a fixed parliament size should aim for a number $L$ which is expected to be just larger than $L^*$. 

Since the effects cannot be quantified and the number $L^*$ depends on the party structure, at best an informed assessment can be made when judging the number of needed compensatory seats. 
Such country specific estimates are made in the next subsection.

\begin{table}[htbp]
  \centering
  \small
  \renewcommand{\arraystretch}{1.2}
  \setlength{\tabcolsep}{2pt}
\setlength{\extrarowheight}{3pt}
{\renewcommand{\arraystretch}{1.15}%
\begin{tabular}{@{}llccc ccc cc@{}}
    \hline
    \multicolumn{10}{@{}l}{Estimate on the compensatory requirement $L^*$ (max over the 5 most recent elections, legal $\tau$)} \\[2pt]
    \hline
    & Method & $L$ in use & $\tau$ (\%) & $k+L$
      & {\small $c\bigl(\beta-\frac12\bigr)\bigl(n-\frac1p\bigr)$}
      & {\small $\max L^*$} & {\small $\max L^*$}
      & {\small $c\beta\bigl(n-\frac{1}{p_1}\bigr)+\frac{c+1}{p_1}$}
      & {\small $c\beta\frac{\bar p}{p}\bigl(n-\frac{1}{\bar p}\bigr)+\frac{c+1}{p}$} \\
    & & & & & \eqref{eq:expected_number_of_levelling_seats} & realized & $k\pm5$ & \eqref{eq:bound_hom_support} & \Cref{thm:deterministic_two_tier_simple} \\
    & & & & & expected & 5 elect. & 5 elect. & guarantee & guarantee \\[2pt]
    \hline
    \textcolor[HTML]{C8102E}{\textbf{Denmark}} & \ovalbox{$\mathtt{DH}$} & 40 & 2 & 175 & 42 & 39 & 45 & 124 & 143 \\
     & $\mathtt{SL}$ &  &  &  & -- & 33 & 38 & 81 & 94 \\[2pt]
    \hline
    \textcolor[HTML]{9E3039}{\textbf{Latvia}} & $\mathtt{DH}$ & 0 & 5 & 100 & 10 & 9 & 12 & 44 & 40 \\
     & \ovalbox{$\mathtt{SL}$} &  &  &  & -- & 5 & 8 & 34 & 33 \\[2pt]
    \hline
    \textcolor[HTML]{00205B}{\textbf{Norway}} & $\mathtt{DH}$ & 19 & 4 & 169 & 38 & 39 & 45 & 136 & 142 \\
     & \ovalbox{$\mathtt{SL}^\dag$} &  &  &  & -- & 19 & 27 & 102 & 105 \\[2pt]
    \hline
    \textcolor[HTML]{F87C7C}{\textbf{Poland}} & \ovalbox{$\mathtt{DH}$} & 0 & 5 & 460 & 73 & 75 & 77 & 248 & 249 \\
     & $\mathtt{SL}$ &  &  &  & -- & 58 & 63 & 175 & 175 \\[2pt]
    \hline
    \textcolor[HTML]{046A38}{\textbf{Portugal}} & \ovalbox{$\mathtt{DH}$} & 0 & 2 & 230 & 53 & 37 & 40 & 179 & 185 \\
     & $\mathtt{SL}$ &  &  &  & -- & 34 & 37 & 126 & 130 \\[2pt]
    \hline
    \textcolor[HTML]{F1BF00}{\textbf{Spain}} & \ovalbox{$\mathtt{DH}$} & 0 & 2 & 350 & 98 & 69 & 71 & 367 & 409 \\
     & $\mathtt{SL}$ &  &  &  & -- & 28 & 30 & 269 & 290 \\[2pt]
    \hline
    \textcolor[HTML]{005CBF}{\textbf{Sweden}} & $\mathtt{DH}$ & 39 & 4 & 349 & 71 & 91 & 92 & 235 & 253 \\
     & \ovalbox{$\mathtt{SL}^\dag$} &  &  &  & -- & 74 & 87 & 170 & 176 \\[2pt]
    \hline
    \textcolor[HTML]{8E44AD}{\textbf{Switzerland}} & \ovalbox{$\mathtt{DH}$} & 0 & 2 & 200 & 50 & 30 & 30 & 185 & 198 \\
     & $\mathtt{SL}$ &  &  &  & -- & 12 & 17 & 139 & 142 \\[2pt]
    \hline
    \textcolor[HTML]{7B3F00}{\textbf{South Africa}} & $\mathtt{DH}$ & 200 & 2 & 400 & 16 & 12 & 14 & 47 & 49 \\
     & $\mathtt{SL}$ &  &  &  & -- & 12 & 16 & 31 & 33 \\[2pt]
    \hline
    \textcolor[HTML]{00C1D4}{\textbf{Turkey}} & \ovalbox{$\mathtt{DH}$} & 0 & 7 & 600 & 118 & 104 & 108 & 438 & 451 \\
     & $\mathtt{SL}$ &  &  &  & -- & 28 & 31 & 329 & 333 \\[2pt]
    \hline
\end{tabular}}
  \caption{Overview of the number of compensatory seats required to avoid incongruence calculated in several different ways. 
  The five rightmost columns are five different "estimates" of how many compensatory seats are needed to avoid incongruence. 
  First, using the formula for the seat surplus of the D'Hondt method. Furthermore, $L^*$ is computed numerically corresponding remarkably well to the formula \eqref{eq:expected_number_of_levelling_seats}. In all cases, the reported number is the maximal number for the five most recent elections in the dataset. 
  The two rightmost columns contain some numbers that have the flavour of rigorous guarantees. They are generally far above the actual numbers. 
 $\mathtt{SL}^\dag$ refers to the modified SL-method and the circle indicates which method is in use.  \label{tab:overview_of_needed_levelling_seats}
 } 
  \label{tab:formulas}
\end{table}

\subsection{Country specific estimates}
The arguments above show that the question of how many compensatory seats should be added to ensure proportionality is a bit ill-posed. At best, a trade-off as the one sketched in  \Cref{fig:schematic_value_of_levelling_seats} should be taken into consideration, while bearing in mind some uncertain estimates of the party system and structure in the future.  
For concreteness, let us nevertheless mention some numbers for some of the countries discussed throughout this paper.

In \Cref{tab:overview_of_needed_levelling_seats}  different estimates on $L^*$ are given, including the largest number the formulas  \eqref{eq:expected_number_of_levelling_seats} and \eqref{eq:expected_number_of_levelling_seats_quota} produced in the worst-case for the last 5 elections (and the same number under small variations of the total number of regional seats $k$). The method in use is also indicated.
Based on the table it is possible to get a rough estimate of the number of needed compensatory seats. For instance the 40 Danish compensatory seats seem reasonable and the 200 South African appear to be far more than needed. Although the seats are to be measured on a per constituency basis, i.e., halving the number of constituencies should halve the corresponding $L^*$, we stick to absolute numbers in the following.


Let us briefly discuss Poland, Portugal, Spain, Switzerland and Turkey, which all currently employ a one-tier electoral system with the D'Hondt method.
In  Poland it has been proposed to move to a mixed system \cite{flis2025mixed} that would increase proportionality. Likewise, the Polish system for the European parliament is two-tier and guaranteed proportional \cite{Pukelsheim2024}. 
To get that using compensatory seats a sensible number would be 75, which is close to the 69 non-compensatory  second tier seats that were in use in the Sejm between 1991 and 2001 \cite{millard2003elections}.  

For Portugal, it seems like around 40 seats would suffice. While 75 is a reasonable number in Spain, which has had a one-tier system with D'Hondt since the transition from dictatorship \cite{riera2017attempts}. There have been several proposals to add a compensatory  second tier \cite{marquez1998spanish,riera2017attempts}.  

Discussions for federal Swiss elections have mainly revolved around changing the electoral method from D'Hondt (known there as Hagenbach-Bischoff) to the  Sainte-Laguë method or around Doppelproporz systems. 
In the current Turkish system, it would be sensible to have 120 compensatory seats. 
In the Turkish election of 1965, a second tier was in use, but it was abolished after that election \cite{teknaz2023adalet}.

\section{Conclusion and outlook}
As discussed, seat surplus can accumulate in one-tier electoral systems in several different ways, but the regional apportionment method is the most important factor.  
In most cases, the widely used D'Hondt method has a much larger seat surplus than Sainte-Laguë or any of the quota methods. 

The main mechanisms of seat surplus are method bias, electoral thresholds, fluctuations, and malapportionment. It was shown that overrepresentation as a result of malapportionment occurs only if the seats in some constituencies are discounted and some parties have strongholds in those constituencies. In the rest of the analysis, the seats were distributed to constituencies using the SL method based on the vote total which eliminates malapportionment. 

The remaining seat surplus was estimated rigorously, in expectation, and by looking at empirical data. Each of the three perspectives shows that the seat surplus scales with the number of constituencies. 
Making the electoral system two-tier by adding compensatory seats can eliminate the seat surplus accumulated across constituencies. 
The number of compensatory seats needed for the electoral result to be congruent with a national apportionment algorithm was denoted $L^*$.
Given the outcome of a one-tier electoral system, it was shown in \Cref{prop:rigorous_bound_on_L} how $L^*$ was determined more or less by the largest relative seat surplus.

A main finding of this paper was that it is most relevant to discuss the number of compensatory seats per constituency, $L^*/c$ and that this number is expected to be linear in the party surplus number $n-\frac{1}{p}$, where $n$ is the number of relevant parties and $p$ is the vote share of the largest party. In particular, this number is independent of the total number of regional seats $k$. This allowed us to reduce the variance of the problem by artificially varying $k$.

 Across the one-parameter family of $\beta$-divisor methods the Sainte-Laguë method is the unique unbiased method (asymptotically), but when constituencies are small the larger parties do tend to get some advantage. The $\beta$ that most often minimizes the number of needed compensatory seats $L^*$ is the slightly biased one with $\beta$ around $0.6$, as opposed to the SL case of $\beta = 0.5$ and D'Hondt of $\beta = 1$. 

We presented some country specific estimates that show that the formulas for the expected required $L^*$ give much more sensible numbers than what one can rigorously guarantee. While the focus of this paper was the number of compensatory seats, designers of electoral systems must have many other aspects in mind (including electoral system, constituencies, their apportionment, voting rules, etc.).  

\subsection{Guaranteed proportionality}
To finish, let us argue for  
\emph{guaranteed proportionality} in multi-member constituencies. By definition, an electoral system is guaranteed proportional if the national allocation has priority (for parties above electoral thresholds). 
One virtue of guaranteed proportionality is that it is hard to game. The national apportionment is easy to understand for voters, while there are many caveats with what is today known as proportional representation. 

Guaranteed proportionality can be achieved without adding enough compensatory seats to ensure congruence. Instead, one can add some compensatory seats and then design the incongruence breaking algorithm such that proportionality has priority over regional or geographic representation (which means that some one-tier seats might be taken back if the national and regional results are incongruent). 

To our knowledge, guaranteed proportionality has so far only been employed in Sweden and Germany building on multi- and single-member constituencies respectively. 
For guaranteed proportional systems it is also important to determine the number of compensatory seats, since all regional seats can be honoured as long as the results are congruent. A potential future analysis could show how many constituency winners would not make it to parliament (due to lacking \emph{Zweitstimmendeckung} in the German nomenclature). 

Designers of electoral systems aiming for increased proportionality and regionality should not necessarily confine themselves to the setup of classical two-tier electoral system. 
A common theme of alternative frameworks is that the dichotomy between regional seats and compensatory seats is blurred. 

The most used alternative proposal is \emph{double proportionality}, which is used by several Swiss cantons, see e.g., \cite[Chap. 14-15]{pukelsheim2017proportional} for a detailed mathematical description.
But many other systems have been proposed \cite{holdum2025impossibility,linusson2014dynamic}.

\subsection{Outlook} 
The formulas for the accumulated seat surplus in one-tier electoral systems are more precise when multi-member constituencies are larger. 
For this reason, the important case of single member constituencies (FPTP) was excluded from the analysis. Obtaining any sensible handle on this case is an interesting avenue for further work and would directly inform the work on electoral reform in Germany and complement the work in \cite{bochsler2023balancing}. 

There is also room for substantial further development of the mathematical work. For simplicity, many formulas were only given for Hare Quota, but generalize to the entire families of $\beta$-divisor and $\gamma$-quota methods. 
In addition, the distributional convergence of \cite{janson2014asymptotic} can be used to estimate the expected variance of $L^*$ and help deal with the 'statistical fluctuations' part of this paper. 

This analysis would also open up for estimating the probability of incongruence. 
This probability was assessed for the Danish electoral system in \cite{klausensandsynligheden} using a slightly more empirical method, which built on a numerical study of the deviations from \eqref{eq:expected_bias}. This method can be generalized to the category of two-tier electoral systems studied in this paper.

Context matters.  The apportionment of the constituency seats, regionally strong parties and island constituencies with guaranteed seats are important for the legitimacy of specific systems. The resulting malapportionment transfer is an important factor for determining the number of compensatory seats in an electoral system.  

For each country currently employing an electoral system with multi-member constituencies the generic analysis made here can be sharpened in that specific context.

\section*{AI statement}
This paper has been powered by AI in many aspects. In particular, the figures and tables were AI-generated, but human-checked, human-curated and human-adjusted. 
The text and formulas are human-written with AI feedback.  
The AI-generated code generating the figures is available on GitHub \href{https://github.com/FrederikRavnKlausen/Ensuring-proportionality-a-logical-model-for-compensatory-seats-added-to-multi-member-constituencies}{here}. 

\section*{Acknowledgments}
The author would like to thank Jørgen Elklit, Sebastian Holdum and Theodora Helimäki for helpful discussions.
This work would have been impossible without the extensive database \cite{kollman2024clea}. Claude Code has sped up the data analysis and figure generation dramatically. 
The author was supported by the Carlsberg Foundation, CF24-0466.

\bibliographystyle{apalike}
\bibliography{bibliography}

\end{document}